\documentclass[10pt,letterpaper, twocolumn]{revtex4}
\usepackage[T1]{fontenc}
\usepackage{amsmath}
\usepackage{amsthm}
\usepackage{amsfonts}
\usepackage{graphicx}
\usepackage{dsfont}

\newcommand{\ket}[1]{|#1\rangle}
\newcommand{\bra}[1]{\langle#1|}
\newcommand{\ev}[1]{\langle#1\rangle}
\newcommand{\dg}[1]{#1^\dagger}
\newcommand{\phdg}[1]{#1^{\phantom\dagger}}
\newcommand{\trans}[1]{#1^{\text{T}}}
\newcommand{\mc}{\mathcal}

\DeclareMathOperator{\tr}{tr}
\DeclareMathOperator{\argtanh}{argtanh}

\newtheorem{theorem}[]{Theorem}
\newtheorem{lemma}[]{Lemma}

\newtheorem{definition}[]{Definition}

\begin{document}

\title{Gaussian matrix product states cannot efficiently describe critical systems}
\author{Adri\'an Franco-Rubio}
\affiliation{Max-Planck-Institut für Quantenoptik, Hans-Kopfermann-Stra\ss e 1, 85748 Garching, Germany}
\affiliation{Munich Center for Quantum Science and Technology, Schellingstra\ss e 4, 80799 München, Germany}
\author{J. Ignacio Cirac}
\affiliation{Max-Planck-Institut für Quantenoptik, Hans-Kopfermann-Stra\ss e 1, 85748 Garching, Germany}
\affiliation{Munich Center for Quantum Science and Technology, Schellingstra\ss e 4, 80799 München, Germany}
\date{\today}

\begin{abstract}
    Gaussian fermionic matrix product states (GfMPS) form a class of ansatz quantum states for 1d systems of noninteracting fermions. We show, for a simple critical model of free hopping fermions, that: (i) any GfMPS approximation to its ground state must have bond dimension scaling \emph{superpolynomially} with the system size, whereas (ii) there exists a non-Gaussian fermionic MPS approximation to this state with \emph{polynomial} bond dimension. This proves that, in general, imposing Gaussianity at the level of the tensor network may significantly alter its capability to efficiently approximate critical Gaussian states. We also provide numerical evidence that the required bond dimension is subexponential, and thus can still be simulated with moderate resources.
\end{abstract}

\maketitle

The growing complexity of quantum many-body wavefunctions with increasing system sizes has motivated the development of variational classes of states. By exploiting simplifying features of a given problem, ansatz states can help optimize numerical resources, as well as provide an insightful new perspective into the inner workings of quantum correlations in these systems. For instance, \textit{Gaussian} states and their correlation matrix formalism greatly facilitate computations involving non-interacting particles. On the other hand, \textit{tensor network} states have become an essential theoretical framework and numerical toolbox for quantum many-body physics, excelling at the representation of area law and similarly low-entangled states \cite{Cirac21}.

For systems of free (or weakly interacting) fermions, both classes can be combined to give rise to \textit{Gaussian fermionic tensor network} states \cite{Kraus10}. Relevant examples include the ground state of the Kitaev-Majorana chain, and models of topological insulators and superconductors \cite{Wahl13, Dubail15, Wahl14, Yang15}. In the 1d case, the resulting tensor network is the \textit{Gaussian fermionic matrix product state} or GfMPS, which has been shown to outperform non-tensor network based methods in free fermion computations for very large systems \cite{Schuch19}. This motivates the question about the expressivity of GfMPS, namely what kind of free fermionic states can be efficiently described by them.

In the case of general MPS, it was proved in \cite{Verstraete06} that an efficient approximation (i.e. one with bond dimension growing at most polynomially with the system size $N$) exists whenever a certain Rényi entropy is bounded by $O(\log N)$. This established the usefulness of MPS to approximate states with at most a logarithmic violation of the area law, including the ground states of both gapped and gapless (critical) local Hamiltonians. In the setting of fermionic chains, it also applies to fermionic MPS (fMPS). However, it is not known whether an analogous result holds for GfMPS whenever the state being approximated is Gaussian with similarly bounded entropies.

Here we answer this question in the negative: we provide a simple counterexample in the form of a critical hopping fermion Hamiltonian, whose Gaussian ground state can be efficiently approximated by fMPS but not by GfMPS. The proof of this last fact combines  a rigorous bound on the error incurred by a fixed rank Gaussian approximation of a Gaussian state with specific knowledge of the entanglement structure of the target ground state, obtained from asymptotic Toeplitz determinant theory. Furthermore, we provide evidence, both from conformal field theory arguments and numerical results, that the required bond dimension scaling for a good GfMPS approximation is nevertheless subexponential. This makes the question about the existence of an efficient GfMPS approximation a hard one to settle numerically, which motivated our pursuit of an analytical proof.

Our result disproves the somewhat intuitive assumption that the most bond dimension-efficient approximation to a Gaussian state would come from a Gaussian tensor network. This can be relevant when optimizing resources for computational applications, though it should be noted that the savings due to having access to the correlation matrix formalism may well compensate the extra bond dimension derived from Gaussianity. Additionally, there are other situations where Gaussian tensor networks have faced difficulty approximating Gaussian states, as is the case for ground states of local, gapped, quadratic Hamiltonians displaying chiral topological features \cite{Wahl13, Dubail15}. In this context, our findings leave the door open to the existence of better, non-Gaussian tensor network approximations that bypass the no-go results.

\textit{Model}--- We consider a periodic chain of length $N$ with a single fermionic mode $a_i, a_i^\dagger$ per site, satisfying the usual canonical anticommutation relations,
\begin{equation}
    \{\phdg a_i,\dg a_{j}\} = \delta_{ij},\qquad \{a_i,a_j\} = \{\dg a_i, \dg a_j\} = 0,
\end{equation}
and study the free hopping Hamiltonian at half filling,
\begin{align}
    	H = -\dfrac{1}{2}\sum_{j=1}^N{\dg a_j \phdg a_{j+1}} + \text{h.c.}= -\sum_{k}{\cos{k}~\dg a_k \phdg a_k},
    	\label{Ham}
\end{align}
where we have defined the momentum modes in the usual form, ${a_k \equiv \frac{1}{\sqrt{N}}\sum_{j=1}^N{e^{ikj}a_j}}$ with $k\in\frac{2\pi}{N}\mathbb{Z}\cap(-\pi,\pi]$. In this basis $H$ is diagonal and its ground state, which is Gaussian due to $H$ being quadratic, can be determined by filling the negative energy modes below the Fermi momentum, $k_F = \pi/2$ \footnote{Whenever $N\equiv 0 \mod 4$, there are zero modes $a_{\pm k_F}$ sitting at the Fermi points, so that the ground state is four-fold degenerate. This does not affect our results so we will ignore this circumstance in our discussion.}. This is encoded in the momentum space correlation matrix,
\begin{equation}
    C_{kq} \equiv \ev{\dg a_k \phdg a_q} = n_k\delta_{k,q},\qquad n_k\equiv\Theta(k_F-|k|),
    \label{corrmat}
\end{equation}
where $\Theta$ denotes the Heaviside step function. The position space correlation matrix $C_{ij}\equiv \ev{\dg a_i \phdg a_j}$ can then be obtained by an inverse Fourier transform. It exhibits power-law decays as befits a gapless model (its explicit form can be seen in Appendix \ref{app:thm_proof}). Note that due to particle number conservation, $\ev{a_ka_q} = \ev{a_ia_j} = 0$.

The entanglement structure of this state, which will be key to the results presented next, can be obtained from its correlation matrix. Given a bipartition of a pure Gaussian state into complementary regions $\mc R, \bar{\mc  R}$, we can find a basis of modes on each subsystem such that the state decomposes as the tensor product of entangled fermion pairs \cite{Botero04}. How entangled these pairs are is given by the spectrum of the correlation matrix of either subsystem, which is nothing but the corresponding submatrix of the global correlation matrix,
\begin{equation}
    C_{\mc R}\equiv (C_{ij})_{i,j\in \mc R}.
\end{equation}
For convenience and notational unity we will work with the eigenvalues of $V_{\mc R}\equiv2C_{\mc R}-\mathds{1}$, which we denote ${\{\lambda_j\}\subset[-1,1]}$, and call the \textit{Gaussian entanglement spectrum} \footnote{Whenever $\mc R$ is (say) larger that its complement, some of the modes from $\mc R$ will not be entangled to modes in $\bar{\mc R}$, but rather remain in a product state, their contribution to the entanglement spectrum being $\lambda=\pm1$.}. The Rényi entropy $S_\alpha$ then splits as a sum of contributions from each entangled pair,
\begin{equation}
S_{\alpha} =\sum_j{s_\alpha(\lambda_j)},
\end{equation}
where
\begin{equation}
    s_\alpha(\lambda)\equiv\dfrac{1}{1-\alpha}\log\left[\left(\dfrac{1+\lambda}{2}\right)^\alpha+\left(\dfrac{1-\lambda}{2}\right)^\alpha\right],
\end{equation}
so that the entanglement decreases with $|\lambda|$ from $\lambda = 0$ (maximally entangled state) to $\lambda=\pm1$ (product state).

For the ground state of $H$, the leading scaling of the Rényi entropy of an interval of size $L$ can be seen to be logarithmic \cite{Vidal03, Jin04, Peschel09},
\begin{equation}
    S_\alpha(L) \sim \dfrac{\alpha+1}{6\alpha}\log L,\qquad L \to\infty,
    \label{ent_sca}
\end{equation}
which is consistent with it lying in the universality class of the free boson conformal field theory (CFT) with central charge $c=1$ \cite{Holzhey94, Calabrese04}.

\textit{Efficient approximation with fMPS} --- A fermionic matrix product state (fMPS) \cite{Kraus10} is defined in terms of a series of so-called \textit{fiducial} states of $f$ physical fermions and $2\chi$ virtual fermions. The state represented by the fMPS is obtained by \textit{contracting} the virtual fermions, i.e. projecting them onto maximally entangled pairs (see \mbox{Fig. \ref{fig:fMPS}}). The dimension $D$ of the virtual Hilbert space, i.e. the bond dimension (b.d.) of the fMPS, is related to $\chi$ via $D = 2^\chi$ \footnote{It is not uncommon to work in the language of Majorana operators, in which case $\chi$ is usually redefined to denote the number of virtual Majorana operators. If, additionally, we are working with periodic boundary conditions or infinite systems, it is possible to have an \textit{odd} number of virtual Majorana operators. This is the case for the GfMPS that we used in our numerics (see Appendix \ref{app:num}). For our theoretical proof, however, we will restrict ourselves to open boundary conditions (OBC), since periodic MPS can always be recast in OBC form.}.

\begin{figure}[t]
	\centering
	\includegraphics[width=.9\linewidth]{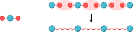}
	\caption{(left) A (G)fMPS fiducial state of one physical fermion and two virtual fermions. (right) A (G)fMPS is obtained from projecting the virtual fermions onto entangled pairs, leaving an entangled state of the physical fermions.}
	\label{fig:fMPS}
\end{figure}

\begin{definition}
A family of states $\ket{\Psi_N}$ for increasing system sizes $N$ is \emph{efficiently approximable} by fMPS if for any $\epsilon>0$ there exists a family of fMPS states $\ket{\Psi^{\text{\tiny MPS}}_N}$ with b.d.~$D(N)=\text{\emph{poly}}(N)$ and
\begin{equation}
    \|\ket{\Psi_N} - \ket{\Psi^{\text{\tiny MPS}}_N}\|_2\leq\epsilon\quad \forall N.
\end{equation}
\end{definition}

\begin{theorem}
The family of ground states of \eqref{Ham} is efficiently approximable by fMPS.
\end{theorem}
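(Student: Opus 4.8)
The plan is to reduce the statement to the general matrix-product-state approximation criterion of Verstraete and Cirac~\cite{Verstraete06}: a family of 1d states admits an MPS approximation with bond dimension $\mathrm{poly}(N)$ (to any fixed accuracy) provided that, for some fixed $\alpha\in(0,1)$, the R\'enyi entropy $S_\alpha$ of every contiguous block grows at most logarithmically in $N$. For the ground state of~\eqref{Ham} the block entanglement scaling~\eqref{ent_sca} is exactly of this type: fixing any $\alpha\in(0,1)$, the entropy of an interval of length $L$ obeys $S_\alpha(L)\le\kappa_\alpha\log L+\kappa_\alpha'$ with $\kappa_\alpha=\tfrac{\alpha+1}{6\alpha}$, uniformly in $1\le L\le N$. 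The remaining point is that the construction transfers from spin MPS to fMPS~\cite{Kraus10}: on an open chain the Jordan--Wigner transformation is an isomorphism that maps an fMPS to an MPS with the same bond dimension and, crucially, preserves the Schmidt spectrum across every contiguous cut, so the entropy bound computed from the fermionic correlation matrix applies verbatim to the spin Schmidt coefficients. We may therefore ``unroll'' the periodic ground state onto an open chain $1,\dots,N$ and build an open-boundary fMPS approximation.

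The steps are then: (i) fix $\alpha\in(0,1)$ --- for concreteness $\alpha=1/2$, for which $\kappa_{1/2}=1/2$ --- and upgrade~\eqref{ent_sca} to the honest bound $S_\alpha(L)\le\kappa_\alpha\log L+\kappa_\alpha'$ for all $L\le N$, using the exact Fisher--Hartwig/Toeplitz asymptotics of the block entropies of~\eqref{Ham}~\cite{Jin04, Peschel09}, whose subleading corrections are bounded; (ii) at a given cut, with Schmidt weights $p_1\ge p_2\ge\cdots$, the elementary inequalities $D\,p_D^{\alpha}\le\sum_i p_i^{\alpha}=e^{(1-\alpha)S_\alpha}$ and $\sum_{j>D}p_j\le p_D^{1-\alpha}\sum_{j>D}p_j^{\alpha}$ yield a discarded weight $\sum_{j>D}p_j\le e^{(1-\alpha)S_\alpha/\alpha}\,D^{-(1-\alpha)/\alpha}\le C\,N^{(1-\alpha)\kappa_\alpha/\alpha}\,D^{-(1-\alpha)/\alpha}$; (iii) performing the Schmidt truncation sequentially across the $N-1$ bipartitions $\{1,\dots,j\}\,|\,\{j+1,\dots,N\}$ and bounding the total error by the sum of the single-cut truncation errors~\cite{Verstraete06}, one obtains $\big\|\,\ket{\psi_N}-\ket{\psi_N^{\mathrm{MPS}}}\,\big\|_2\le C'\,N^{\,1+(1-\alpha)\kappa_\alpha/(2\alpha)}\,D^{-(1-\alpha)/(2\alpha)}$, so that requiring the right-hand side to be $\le\epsilon$ forces only $D=O\!\big(N^{\,2\alpha/(1-\alpha)+\kappa_\alpha}\,\epsilon^{-2\alpha/(1-\alpha)}\big)$, which for $\alpha=1/2$ reads $D=O(N^{5/2}\,\epsilon^{-2})$; (iv) Jordan--Wigner maps this open MPS back to an fMPS of the same bond dimension, which is the desired approximation.

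I do not expect a genuine obstacle, since every ingredient is available; the two places that deserve a line of care are the ones flagged above. The first is promoting the asymptotic relation~\eqref{ent_sca} to a bound uniform in $L\le N$ with explicit constants --- this is exactly what the Fisher--Hartwig expansion of the relevant Toeplitz determinant provides (the same analytic input that drives the companion GfMPS lower bound), and it is the only step where something beyond invoking~\cite{Verstraete06} is needed. The second is the fermionic bookkeeping: one must check that passing between the fermionic and spin pictures on an \emph{open} chain changes neither the bond dimension nor the per-cut entanglement, which holds because the Jordan--Wigner string of a contiguous block stays inside that block. I would also note that the argument is insensitive to the detailed value of the central charge and applies to any critical chain with logarithmically bounded R\'enyi entropies, so the real content of the theorem is the contrast it sets up with the GfMPS obstruction proved next.
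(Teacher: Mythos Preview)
Your proposal is correct and follows essentially the same route as the paper: map to the XX model via Jordan--Wigner, invoke the Verstraete--Cirac criterion using the logarithmic R\'enyi entropy scaling~\eqref{ent_sca}, and pull the resulting MPS back to an fMPS. The only point the paper treats more carefully than you do is step~(iv): a generic spin MPS is not literally an fMPS, since the latter requires bond dimension a power of~2 and tensors of definite parity, and Appendix~\ref{app:fMPS} shows these conditions can be enforced at the cost of at most a factor of~4 in bond dimension---a detail that does not affect the polynomial scaling but is worth stating rather than assuming ``the same bond dimension.''
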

\begin{proof}
Using the Jordan-Wigner transformation, we can map our system to a spin chain model, the XX model. Since we have a logarithmic bound \eqref{ent_sca} for the Rényi entropies, in particular for $\alpha<1$, it follows from Lemmas 1 and 2 in \cite{Verstraete06} that the ground state of the XX model is efficiently approximable by MPS. Then, undoing the Jordan-Wigner transformation, we can find an efficient fMPS approximation for the fermionic model (this is done in detail in Appendix \ref{app:fMPS}).
\end{proof}

In general, understanding the entanglement structure of quantum states is key to obtain both approximability and inapproximability results, since the bond dimension of an MPS has a clear interpretation as the maximum Schmidt rank for a bipartition of an MPS into connected subsystems. In order to compare it with an analogous result in the next section, we cite here the following

\begin{lemma}[Low Schmidt rank approximation]
Let $\ket{\Psi}$ by a bipartite quantum state with Schmidt spectrum $\{s_j\}_{j=1}^n$ in descending order. Then for any bipartite state $\ket{\tilde\Psi}$ of Schmidt rank at most $r$,
\begin{equation}
    |\bra{\Psi}\tilde\Psi\rangle|^2\leq 1-\sum_{j=r+1}^n{s_j^2},
\end{equation}
and the bound is tight: the optimal $\ket{\tilde\Psi}$ can be found by truncating the Schmidt decomposition of $\ket{\Psi}$.
\label{EYM}
\end{lemma}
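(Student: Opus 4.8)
The plan is to reduce the statement to a standard fact about truncating the eigenvalue expansion of a reduced density matrix, i.e.\ an Eckart--Young--Mirsky/Ky Fan type argument. First I would write the Schmidt decomposition $\ket{\Psi} = \sum_{j=1}^n s_j\ket{u_j}_A\ket{v_j}_B$, so that the reduced density matrix $\rho_A \equiv \tr_B\ket{\Psi}\bra{\Psi}$ has eigenvalues $\{s_j^2\}$ in descending order. Given any normalized $\ket{\tilde\Psi}$ of Schmidt rank at most $r$, its reduced density matrix $\tilde\rho_A$ has rank at most $r$; let $P$ be the orthogonal projector onto $\mathrm{supp}(\tilde\rho_A)$, so that $\mathrm{rank}(P)\le r$ and $(P\otimes\mathds{1})\ket{\tilde\Psi} = \ket{\tilde\Psi}$.

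The key step is then a one-line Cauchy--Schwarz estimate: $|\bra{\Psi}\tilde\Psi\rangle| = |\bra{\Psi}(P\otimes\mathds{1})\ket{\tilde\Psi}| \le \|(P\otimes\mathds{1})\ket{\Psi}\|\,\|\ket{\tilde\Psi}\| = \sqrt{\bra{\Psi}(P\otimes\mathds{1})\ket{\Psi}} = \sqrt{\tr(P\rho_A)}$. Since $P$ is a projector of rank at most $r$ and $\rho_A$ has eigenvalues $s_1^2\ge s_2^2\ge\cdots\ge s_n^2$, Ky Fan's maximum principle gives $\tr(P\rho_A)\le\sum_{j=1}^r s_j^2 = 1-\sum_{j=r+1}^n s_j^2$, which is the claimed bound. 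For the tightness claim I would simply exhibit the truncated state $\ket{\tilde\Psi} = \big(\sum_{j=1}^r s_j^2\big)^{-1/2}\sum_{j=1}^r s_j\ket{u_j}_A\ket{v_j}_B$, which has Schmidt rank at most $r$ and overlap $|\bra{\Psi}\tilde\Psi\rangle|^2 = \sum_{j=1}^r s_j^2$, saturating the inequality.

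I do not expect a serious obstacle here; the only points requiring a little care are the identification of the Schmidt rank of $\ket{\tilde\Psi}$ with $\mathrm{rank}(\tilde\rho_A)$ (immediate from the Schmidt decomposition of $\ket{\tilde\Psi}$) and the invocation of Ky Fan's principle, which if desired can be proved by hand: writing $\tr(P\rho_A) = \sum_{i=1}^{r'}\bra{e_i}\rho_A\ket{e_i}$ for an orthonormal basis $\{e_i\}$ of $\mathrm{range}(P)$ with $r'\le r$, and bounding the sum of those diagonal entries by the sum of the $r$ largest eigenvalues of $\rho_A$. If one prefers to stay purely in matrix language, the same result follows from the von Neumann trace inequality applied to the coefficient matrices of $\ket{\Psi}$ and $\ket{\tilde\Psi}$ combined with Cauchy--Schwarz on their singular values, but the density-matrix route above seems cleanest.
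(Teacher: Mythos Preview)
Your proof is correct. The paper itself does not give a detailed argument for this lemma; it simply states that it is a consequence of the Eckart--Young--Mirsky theorem on optimal low-rank matrix approximation. Your route via the projector onto $\mathrm{supp}(\tilde\rho_A)$, Cauchy--Schwarz, and Ky Fan's maximum principle is a clean, self-contained proof of precisely the special case of Eckart--Young--Mirsky that is needed here, so your approach and the paper's citation are essentially the same in spirit, with yours supplying the details the paper omits.
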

Lemma \ref{EYM} is a consequence of the Eckart-Young-Mirsky theorem \cite{Eckart36, Mirsky60}, which states that the optimal low rank approximation to a given matrix comes from truncating its singular value decomposition. It leads to a \textit{lower} bound for the error of an MPS approximation \footnote{Even though, we stated the definition of approximability in terms of norms (as it is customary) we will, for convenience, mostly work in terms of fidelities, as in Lemmas \ref{EYM} and \ref{FGRA}. This does not change anything since both approaches are equivalent.} (used for instance in some of the inapproximability proofs in \cite{Schuch08}).

\textit{No efficient approximation with Gaussian fMPS} --- A Gaussian fMPS (GfMPS) is an fMPS for which all fiducial states are Gaussian. Since the contraction operation projects each pair of virtual modes onto a Gaussian state, the maximally entangled pair, the global state after contraction is necessarily Gaussian. The contraction of GfMPS tensors can be done at the level of correlation matrices via Schur complements \cite{Bravyi05, Schuch19} (see Appendix \ref{app:num}). 

An efficient approximation in terms of GfMPS can be defined analogously to the fMPS case. Then, we have our main result as
\begin{theorem}
The family of ground states of \eqref{Ham} is \emph{not} efficiently approximable by GfMPS.
    \label{thm_superpoly}
\end{theorem}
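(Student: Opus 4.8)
The plan is to exploit the defining restriction of a GfMPS---that each bond carries only finitely many virtual fermions---to bound the Gaussian entanglement spectrum the ansatz can produce across a cut, and then show that the ground state's spectrum is too rich to be matched by such a restricted state. Since a periodic GfMPS of bond dimension $D$ can be recast as an open-boundary GfMPS of bond dimension $D^2$ without spoiling Gaussianity or the polynomial scaling, I may assume the approximation $\ket{\Psi^{\text{\tiny MPS}}_N}$ is an OBC GfMPS. Cutting it at the midpoint, the half-chain $\mc R=\{1,\dots,N/2\}$ is crossed by a single bond, so the reduced state $\tilde\rho_{\mc R}$ is Gaussian with rank at most $D$; since a pure Gaussian state decomposes into entangled fermion pairs, at most $\chi:=\log_2 D$ of these pairs can be entangled with $\bar{\mc R}$, i.e.\ the Gaussian entanglement spectrum of $\ket{\Psi^{\text{\tiny MPS}}_N}$ across $\mc R$ has at most $\chi$ eigenvalues in $(-1,1)$. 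If $D=\mathrm{poly}(N)$ then $\chi=O(\log N)$.

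The second ingredient is a Gaussian counterpart of Lemma \ref{EYM}: a fixed-rank Gaussian approximation bound (Lemma \ref{FGRA}) asserting that a pure Gaussian state with Gaussian entanglement spectrum $\{\lambda_j\}$ across $\mc R$ (ordered by increasing $|\lambda_j|$, i.e.\ decreasing entanglement) has overlap at most $\prod_{j>\chi}\tfrac{1+|\lambda_j|}{2}$ with any pure Gaussian state carrying at most $\chi$ entangled pairs across $\mc R$, the optimum being attained by keeping the $\chi$ most entangled pairs and disentangling the rest. Applied to the ground state across $\mc R$ this yields $|\bra{\Psi_N}\Psi^{\text{\tiny MPS}}_N\rangle|^2\le\prod_{j>\chi}\tfrac{1+|\lambda_j|}{2}$, hence $-\log|\bra{\Psi_N}\Psi^{\text{\tiny MPS}}_N\rangle|^2\ge\sum_{j>\chi}\log\tfrac{2}{1+|\lambda_j|}\ge\tfrac12\sum_{j>\chi}\bigl(1-|\lambda_j|\bigr)$.

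It then remains to show that this lower bound diverges whenever $\chi=O(\log N)$. Here I would invoke the large-size asymptotics of the entanglement (modular) Hamiltonian of an interval of the critical XX chain, obtained from Fisher--Hartwig/Toeplitz-determinant theory: writing $|\lambda_j|=\tanh(|\varepsilon_j|/2)$, the single-particle entanglement energies become asymptotically equidistant, $|\varepsilon_j|\sim c\,j/\log N$ with $c>0$ for the half-chain, so that $1-|\lambda_j|\sim 2\exp(-c\,j/\log N)$. Summing this geometric-like tail gives $\tfrac12\sum_{j>\chi}(1-|\lambda_j|)=\Omega\bigl(\log N\cdot e^{-c\chi/\log N}\bigr)$; when $\chi=O(\log N)$ the exponential factor stays bounded below by a positive constant while the prefactor diverges, so $|\bra{\Psi_N}\Psi^{\text{\tiny MPS}}_N\rangle|\to 0$. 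This contradicts the existence of a GfMPS family with fixed error $\epsilon<\sqrt2$, proving the theorem; in fact it shows that polynomial bond dimension cannot achieve even a nonvanishing overlap with the ground state.

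I expect the crux to be this last step: converting the (well understood) leading logarithmic growth of the R\'enyi entropies into the finer estimate needed here. The entropy scaling by itself only fixes the density of appreciably entangled modes and is perfectly compatible with efficient \emph{general} fMPS approximation; what must be extracted from Toeplitz asymptotics is quantitative control on how slowly the $\lambda_j$ approach $\pm1$---equivalently, that the number of $\lambda_j$ lying in a fixed window $[1-\delta,1-\delta/2]$ grows like $\Omega(\log N)$---which is exactly what forces the disentangling cost above to diverge. A secondary and more routine point is to verify that Lemma \ref{FGRA} applies with the fermionic normal-mode bases correctly aligned, and to carry the estimate through the precise form of the correlation matrix recorded in Appendix \ref{app:thm_proof}.
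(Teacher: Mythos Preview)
Your strategy is exactly the paper's: bound the Gaussian rank of a GfMPS across a cut by $\chi=\log_2 D$, apply Lemma~\ref{FGRA}, and then show that the truncated $\infty$-R\'enyi tail $S_\infty^{\text{trunc}}[\chi]$ diverges whenever $\chi=O(\log N)$ using Toeplitz/Fisher--Hartwig input. The one place where your write-up and the paper diverge is the form of that spectral input. You invoke the equidistant modular spectrum $|\varepsilon_j|\sim c\,j/\log N$ and sum the resulting geometric tail; the paper explicitly treats this CFT form as a \emph{heuristic} (it is used only in the non-rigorous ``CFT argument'' section) because it is not known to hold rigorously for the half-chain of a finite system. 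The rigorous substitute is the weaker counting statement of Lemma~\ref{property}: for every $c>0$ there exists $\mu<1$ such that the number of $|\lambda_j|<\mu$ exceeds $c\log N$. This is precisely what you flag in your last paragraph as the crux, and the paper proves it in two steps you do not mention: first for the infinite chain, by pairing the Basor/Fisher--Hartwig asymptotics of $\det(z\mathds{1}-V_{L,\infty})$ with a holomorphic minorant of the indicator $\Theta(\mu-|\lambda|)$ to turn determinant asymptotics into eigenvalue counts; then transferring to finite $N$ via a trace-norm bound $\|C_{L,N}-C_{L,\infty}\|_1=O(1)$. With Lemma~\ref{property} in hand the endgame is cleaner than your geometric-sum estimate: one simply notes $S_\infty^{\text{trunc}}[\chi]\ge(\mathcal I_{L,N}(\mu)-\chi)\,s_\infty(\mu)\ge s_\infty(\mu)\log N\to\infty$.
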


The proof of Theorem \ref{thm_superpoly} will follow from two lemmas. The first one is the Gaussian version of Lemma \ref{EYM}. Given the Gaussian entanglement spectrum $\{\lambda_j\}$ of a bipartite state, we call the number of eigenvalues $\lambda_j\neq \pm1$ its \textit{Gaussian rank}. We then have
\begin{lemma}[Low Gaussian rank approximation]
\label{FGRA}
Let $\ket{\Psi}$ be a bipartite Gaussian state, with Gaussian entanglement spectrum $\{\lambda_i\}_{i=1}^n$, ordered so that $|\lambda_i|\leq|\lambda_{i+1}|$. Then for any Gaussian state $\ket{\tilde\Psi}$ of Gaussian rank at most $r$,
\begin{equation}
   |\bra{\Psi}\tilde\Psi\rangle|^2 \leq \prod_{i=r+1}^n{\dfrac{1+|\lambda_i|}{2}}=\exp{\left(-S_\infty^{\emph{trunc}}[r]\right)},
    \label{maxoverlap}
\end{equation}
where $S_\infty^{\emph{trunc}}[r] \equiv \sum_{i=r+1}^n{s_\infty(\lambda_i)}$, and the bond is tight: the optimal $\ket{\tilde\Psi}$ can be found by truncating the Gaussian singular value decomposition of $\ket{\Psi}$.
\end{lemma}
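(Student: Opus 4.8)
The plan is to prove \eqref{maxoverlap} in two halves --- achievability of the bound and its optimality --- both organized around the fermionic analogue of the singular value decomposition for bipartite pure Gaussian states \cite{Botero04}: after acting with suitable Gaussian unitaries supported separately on $\mc R$ and on $\bar{\mc R}$, $\ket{\Psi}$ acquires the normal form $\bigotimes_{i=1}^n\ket{\psi(\lambda_i)}$, where each $\ket{\psi(\lambda_i)}$ involves exactly one mode of $\mc R$ and one of $\bar{\mc R}$ and has Schmidt coefficients $\sqrt{(1+\lambda_i)/2}$ and $\sqrt{(1-\lambda_i)/2}$; a pair with $|\lambda_i|=1$ is a product state of its two modes and does not count towards the Gaussian rank. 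Since local Gaussian unitaries preserve overlaps and map the set of Gaussian states of Gaussian rank at most $r$ to itself, it suffices to prove \eqref{maxoverlap} with $\ket{\Psi}$ in normal form. (The exponential form of the bound is then just the rewriting $\prod_{i>r}\tfrac{1+|\lambda_i|}{2}=\exp(-\sum_{i>r}s_\infty(\lambda_i))$, using $s_\infty(\lambda)=-\log\tfrac{1+|\lambda|}{2}$.)

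Achievability is then immediate: let $\ket{\tilde\Psi}$ keep intact the $r$ pairs with the smallest $|\lambda_i|$ (indices $i\le r$) and replace each of the remaining pairs $\ket{\psi(\lambda_i)}$, $i>r$, by its dominant Schmidt vector, which is a product state of the two modes in question. This $\ket{\tilde\Psi}$ is a tensor product of Gaussian pure states, hence Gaussian, and has Gaussian rank at most $r$; its overlap with $\ket{\Psi}$ factorizes over the pairs, equal to $1$ on the retained pairs and to $\sqrt{(1+|\lambda_i|)/2}$ on the truncated ones, so $|\bra{\Psi}\tilde\Psi\rangle|^2=\prod_{i=r+1}^n\tfrac{1+|\lambda_i|}{2}$, which is the right-hand side of \eqref{maxoverlap}.

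For the upper bound I would use the closed-form expression for the overlap of two pure fermionic Gaussian states in terms of their correlation matrices (e.g., $|\bra{\Psi}\tilde\Psi\rangle|^2=\det(C\tilde C+(\mathds{1}-C)(\mathds{1}-\tilde C))$ in the number-conserving case, with $C,\tilde C$ the Hermitian projectors associated with the two states) and maximize the right-hand side over all projectors $\tilde C$ whose restriction to $\mc R$ has at most $r$ eigenvalues different from $0$ and $1$ --- the condition defining Gaussian rank at most $r$. In the normal-form basis, $C$ is block diagonal over the $n$ pairs, each block a rank-one projector; if $\tilde C$ is also block diagonal over these pairs the determinant factorizes, each pair contributing $1$ when $\tilde C$ agrees with $C$ there and at most $\max(\tfrac{1+\lambda_i}{2},\tfrac{1-\lambda_i}{2})=\tfrac{1+|\lambda_i|}{2}$ when $\tilde C$ is a product state there, so that --- with at least $n-r$ pairs forced to be products --- optimizing over which pairs to retain reproduces $\prod_{i=r+1}^n\tfrac{1+|\lambda_i|}{2}$. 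It then remains to show that no non-block-diagonal $\tilde C$ can do better, i.e. that the constrained maximum of $\det(C\tilde C+(\mathds{1}-C)(\mathds{1}-\tilde C))$ is attained at a $\tilde C$ that is block diagonal with respect to the pair structure of $\ket{\Psi}$; I expect this to follow from a suitable determinant inequality of Hadamard--Fischer type or, equivalently, from a first-order optimality analysis on the (stratified) manifold of rank-constrained projectors.

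This last step is where I expect the real difficulty to lie: the active modes of the approximating state need not align with the entangled-pair basis of $\ket{\Psi}$, so one genuinely has to control a nonconvex optimization of a determinant --- equivalently, of the fidelity between fermionic Gaussian states --- under a rank constraint on a submatrix of the correlation matrix, and show that misalignment can only reduce the overlap. In particular, the shortcut of invoking Lemma \ref{EYM} with Schmidt rank $2^r$ (which is the Schmidt rank of any Gaussian-rank-$r$ state) is not available, as it yields a strictly weaker estimate; the sharp bound \eqref{maxoverlap} requires exploiting the Gaussian structure directly.
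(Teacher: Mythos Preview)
Your achievability argument and your localization of the hard step are both correct: the entire content of the lemma is that the optimal low-Gaussian-rank approximation must align with the entangled-pair basis of $\ket{\Psi}$, and you have not carried this out, only conjectured two possible mechanisms. The paper closes the gap via the first-order optimality route you mention (not via Hadamard--Fischer, which does not naturally see a rank constraint on a \emph{sub}matrix and is not used), but with two twists relative to your sketch. First, it proves a strictly more general statement (Theorem~\ref{thm_overlap}): fixing \emph{both} Gaussian spectra $\{\theta_i\},\{\tilde\theta_i\}$ (with $|\lambda_i|=\cos\theta_i$) and varying only the local mode bases, the maximal overlap is $\max_{\sigma}\prod_i\cos^2\!\bigl(\tfrac{\theta_i-\tilde\theta_{\sigma(i)}}{2}\bigr)$, attained exactly when the two states are simultaneously in Gaussian SVD form; Lemma~\ref{FGRA} then follows by setting $n-r$ of the $\tilde\theta_i$ to zero and optimizing the rest. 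Second, the analysis is carried out in the Majorana picture, where $|\langle\Gamma|\tilde\Gamma\rangle|^2=2^{-2n}\sqrt{\det(\Gamma+\tilde\Gamma)}$: stationarity under local rotations $\tilde\Gamma\mapsto e^{t(o\oplus q)}\tilde\Gamma e^{-t(o\oplus q)}$ forces $\mathcal B\equiv[\tilde\Gamma,(\Gamma+\tilde\Gamma)^{-1}]$ to be block off-diagonal in the bipartition, and the algebraic identity $\{\Gamma,\mathcal B\}=0$ (from $\Gamma^2=-\mathds{1}$) then constrains $\mathcal B$ enough to conclude that the extremal $\tilde\Gamma$ shares $\Gamma$'s pair decomposition. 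The argument is patterned on Thm.~VI.7.1 in Bhatia's \emph{Matrix Analysis}.

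One further remark: restricting to the number-conserving overlap formula $\det\bigl(C\tilde C+(\mathds{1}-C)(\mathds{1}-\tilde C)\bigr)$ is not sufficient, since neither the lemma nor the approximating GfMPS in the intended application are assumed number-conserving; this is another reason the paper works in the Majorana formalism throughout.
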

We were not able to find a proof in the literature so we provide one together with related results in Appendix \ref{app:FGRA}. Lemma \ref{FGRA} can be used to lower bound the error of a GfMPS approximation to a given state, since the Gaussian rank of any GfMPS divided into two connected subsystems is upper bounded by $\chi$. We then need information on the entanglement spectrum of our target state, which is provided by

\begin{lemma}
For the ground state of \eqref{Ham}, let $\mathcal{I}_{L,N}(\mu)$ be the number of eigenvalues $\lambda$ from the Gaussian entanglement spectrum of an interval of size $L$ in a chain of $N$ sites that satisfy $|\lambda|<\mu$, and let $c>0$. Then there exists $\mu<1$ such that
\begin{equation}
    \mathcal{I}_{L,N}(\mu) > c\log N,
\end{equation}
as $L,N\to\infty$ with $L/N$ fixed.
\label{property}
\end{lemma}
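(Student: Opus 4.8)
The plan is to extract the needed counting estimate from the known asymptotics of the entanglement spectrum of the critical XX chain. Recall that for an interval of size $L$ the Rényi entropies obey \eqref{ent_sca}, and more refined results (Jin--Korepin, Peschel, Calabrese--Calabrese) give the full distribution of the single-particle entanglement eigenvalues $\lambda_j$. Concretely, writing $\eps_j$ for the single-particle entanglement energies defined by $\lambda_j = \tanh(\eps_j/2)$, the dispersion of these $\eps_j$ is approximately linear with a spacing that scales like $1/\log L$: the density of entanglement levels near $\eps=0$ grows like $\tfrac{1}{\pi^2}\log L$ (this is exactly the mechanism producing the $\log L$ entropy). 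Thus the first step is to cite/derive, via the Fisher--Hartwig asymptotics for the relevant Toeplitz determinant (as deferred to Appendix \ref{app:thm_proof}), that the counting function of entanglement energies in a window $[-\Lambda,\Lambda]$ behaves like $\mathcal{N}_{L}(\Lambda) \sim \tfrac{2\Lambda}{\pi^2}\log L + O(1)$ for fixed $\Lambda$ as $L\to\infty$.

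Second, I would translate the window in $\eps$ into a window in $\lambda$. Since $\lambda=\tanh(\eps/2)$ is a fixed monotone bijection, for any chosen $\mu<1$ there is a corresponding $\Lambda = 2\,\argtanh\mu < \infty$, and $|\lambda_j|<\mu$ is equivalent to $|\eps_j|<\Lambda$. Hence $\mathcal{I}_{L,N}(\mu)$ equals the entanglement-energy counting function evaluated at $\Lambda$, which by the previous step is $\sim \tfrac{2\Lambda}{\pi^2}\log L$. Third, I invoke the hypothesis $L/N$ fixed: then $\log L = \log N + \log(L/N) = \log N + O(1)$, so $\mathcal{I}_{L,N}(\mu) \sim \tfrac{2\Lambda}{\pi^2}\log N$. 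Finally, given the target constant $c>0$, it suffices to pick $\mu$ close enough to $1$ that $\tfrac{2\Lambda}{\pi^2} = \tfrac{4\argtanh\mu}{\pi^2} > c$; for that $\mu$ the bound $\mathcal{I}_{L,N}(\mu) > c\log N$ holds for all sufficiently large $L,N$ with $L/N$ fixed, which is the claim.

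The main obstacle is making the asymptotic counting statement rigorous rather than heuristic. The scaling $S_\alpha\sim\frac{\alpha+1}{6\alpha}\log L$ alone does not immediately pin down the level density near $\eps=0$; one genuinely needs control of the subleading structure of the entanglement Hamiltonian. The cleanest route is to use the exact expression for the correlation matrix $C_{ij}=\frac{\sin(k_F(i-j))}{\pi(i-j)}$ and the connection between $\det(V_{\mathcal R}^2-1)$-type quantities and Toeplitz determinants with Fisher--Hartwig singularities, from which the precise $\frac{1}{\pi^2}\log L$ density of entanglement energies follows (this is the content of the Jin--Korepin/Peschel analysis). A subtlety to handle carefully is finite-$N$ versus $L\to\infty$ limits and the role of periodicity: one should either work directly with the periodic chain's correlation matrix or argue that, for $L/N$ fixed and bounded away from $1$, replacing the periodic kernel by the infinite-line kernel changes the entanglement spectrum only by corrections that do not affect the leading $\log$ count. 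Provided these points are dealt with, the rest of the argument is the elementary change of variables and the choice of $\mu$ described above.
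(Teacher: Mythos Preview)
Your outline matches the paper's strategy closely: Fisher--Hartwig asymptotics for the Toeplitz correlation matrix give an eigenvalue density $\sim \tfrac{2}{\pi^2}\log L\cdot\tfrac{d\lambda}{1-\lambda^2}$ on $[-1,1]$, which under $\lambda=\tanh(\eps/2)$ is precisely your linear-in-$\eps$ level spacing, and then one chooses $\mu$ so that $\tfrac{4}{\pi^2}\argtanh\mu>c$. You have also correctly flagged the two places where real work is required.

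Where your proposal stops short is exactly at those two places, and the paper supplies concrete devices for both. First, the Fisher--Hartwig expansion controls $\sum_i f(\lambda_i)$ only for $f$ holomorphic on a neighbourhood of $[-1,1]$; the indicator of $[-\mu,\mu]$ is not, so the ``density'' statement is heuristic. The paper fixes this by constructing an explicit holomorphic minorant $f_\mu(\lambda)=\dfrac{(1-\lambda^2)(\mu^2-\lambda^2)}{(2-\mu^2-\lambda^2)^2}\leq \Theta(\mu-|\lambda|)$ whose integral against $\tfrac{1}{1-\lambda^2}$ still diverges as $\mu\to1$, yielding a rigorous lower bound $\mathcal I_{L,\infty}(\mu)\geq \sum_i f_\mu(\lambda_i)\sim\text{(divergent coeff.)}\cdot\log L$. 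Second, for the passage from the infinite line to the periodic chain, hand-waving that corrections ``do not affect the leading $\log$ count'' is not enough: the paper proves that $\|C_{L,N}-C_{L,\infty}\|_1$ stays bounded as $N\to\infty$ with $L/N$ fixed (via an explicit expansion into low-rank Toeplitz pieces), and then invokes the Hermitian eigenvalue perturbation bound $\sum_i|\alpha_i-\beta_i|\leq\|A-B\|_1$ to conclude that any discrepancy of order $\log L$ in the counting functions $\mathcal I_{L,N}$ and $\mathcal I_{L,\infty}$ would force the trace norm to diverge, a contradiction.

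So your plan is correct and aligned with the paper, but to turn it into a proof you need exactly those two ingredients: a holomorphic lower bound for the indicator, and a quantitative (trace-norm) comparison of the finite and infinite correlation matrices.
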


The proof of Lemma \ref{property} can be found in Appendix \ref{app:thm_proof}. It starts by proving the equivalent property for the Gaussian entanglement spectra of the infinite chain: in the thermodynamic limit, we can exploit the theory of Toeplitz determinants to lower bound the corresponding $\mathcal{I}_{L,\infty}(\mu)$ function in the asymptotic regime. Then we use standard inequalities to show that the difference between the finite and infinite chain correlation matrices is bounded in trace norm. This ensures that their respective spectra are distributed similarly enough so that Lemma \ref{property} follows.

\begin{proof}[Proof (of Thm. \ref{thm_superpoly})]
Suppose there exists a GfMPS approximation with polynomial b.d.~$D(N)$. Then we can find ${c>0}$ such that ${\chi(N)=\log_2{D(N)}\leq c\log N}$. Thanks to Lemma \ref{property}, we know there exists some ${\mu<1}$ such that ${\mathcal{I}_{L,N}(\mu) > (c+1)\log N}$, and we have
\begin{align}
    S_\infty^{\emph{trunc}}[\chi(N)]&\geq (\mathcal{I}_{L,N}(\mu) - \chi(N))s_\infty(\mu)\nonumber\\
    &\geq s_\infty(\mu)\log N,
\end{align}
which diverges as $N\to\infty$. Since $\ket{\tilde\Psi_N}$ has Gaussian rank bounded by $\chi(N)$ across the bipartition, Lemma \ref{FGRA} implies that the overlap between the ground state and its GfMPS approximation goes to zero as the system size increases. Thus, by contradicion, any approximation with bounded error must have $\chi(N)$ growing faster than logarithmically, and consequently $D(N)$ grows superpolynomially.
\end{proof}

\textit{CFT argument} --- The techniques used in the proof of Theorem \ref{thm_superpoly} cannot be applied to obtain better lower bounds, or upper bounds on the required b.d., for which more accurate knowledge of the Gaussian entanglement spectra of finite chains would be needed. Here we provide evidence that this b.d.~is subexponential. First, a heuristic argument is made, based on conformal field theory (CFT). In the next section, we present some numerical results. 

The low-lying entanglement spectrum of a critical model is know to behave universally according to the underlying CFT. With our notation, for an interval of $L$ sites in an chain of $N$ sites, we have \cite{Peschel04, Lauchli13, Ohmori15, Cardy16}
\begin{equation}
    |\lambda_n| \simeq \tanh{\left(\dfrac{\pi^2}{2}\dfrac{\varepsilon_n}{\log \ell}\right)},\qquad \ell = \dfrac{N}{\pi a}\sin{\dfrac{\pi L}{N}} 
    \label{cftspec}
\end{equation}
where $\ell$ is the effective length of our interval in units of some UV cutoff $a$, and the $\varepsilon_n$ are fixed by the CFT. In our model, this is the free compactified boson CFT, and
\begin{equation}
    \varepsilon_n \equiv \left\lfloor\dfrac{n}{2}\right\rfloor+\dfrac{1}{2} = \dfrac{1}{2}, \dfrac{1}{2}, \dfrac{3}{2}, \dfrac{3}{2}, \dfrac{5}{2}, \dfrac{5}{2}\ldots
\end{equation}
(More precisely, these numbers characterize the spectrum of scaling dimensions of an associated boundary CFT.) The CFT spectrum \eqref{cftspec} not only satisfies the condition in Lemma \ref{property}, but it also allows us to estimate the tail contribution to the $\infty$-Rényi entropy,
\begin{equation}
    S_\infty^{\text{trunc}}[\chi] \approx\dfrac{2\log \ell}{\pi^2}\exp\left(-\dfrac{\pi^2\chi}{2\log \ell}\right).
\end{equation}
Thus, if the CFT prediction were exact for the whole spectrum, by Lemma \ref{FGRA} the required scaling for $\chi$ is
\begin{equation}
    \chi(N,\epsilon)\approx\dfrac{2}{\pi^2}\log{\eta N} \log\left(\dfrac{\pi^2}{2\epsilon}\log{\eta N}\right),
    \label{heuristic_scaling}
\end{equation}
for some proportionality constant $\ell = \eta N$. Here we used the fidelity error $\epsilon\equiv 1-\exp{\left(-S_\infty^{\emph{trunc}}[\chi]\right)}$ attached to a single bipartition of the system. Our experience from MPS is that we should account for all bipartitions, with different $L/N$ and thus different $\eta$ \cite{Verstraete06}. We do this coarsely by replacing $\epsilon\to\epsilon/N$. This results in
\begin{align}
    D(N,\epsilon) \approx (\eta N)^{\frac{\log 2}{\pi^2}\log{\left(\frac{2N\log \eta N}{\pi^2 \epsilon}\right)}},
    \label{heuristic_scaling_N}
\end{align}
which is subexponential.

\textit{Numerics} --- We have also performed some numerical studies of this problem which seem to partially confirm the scaling from \eqref{heuristic_scaling}. We introduce them briefly here, and elaborate on them in Appendix \ref{app:num}. Essentially, we defined a subclass of translation invariant GfMPS (which we call ladder GfMPS), which exactly represent states with the following occupation number in momentum space:
\begin{equation}
    n_k = \dfrac{p(\cos \frac{k}{2})^2}{p(\cos \frac{k}{2})^2+q(\sin \frac{k}{2})^2},
    \label{generic_nk}
\end{equation}
for $p,q$ arbitrary real odd monic polynomials. Clearly, to reproduce \eqref{corrmat}, we need to have $p$ (resp.~$q$) supported mostly inside (resp.~outside) the Fermi surface. We tried several choices, the best of which is represented in Fig.~\ref{fig:numer}. There we have used $\delta \equiv \bra{\Psi^{\text{\tiny MPS}}}H_{\text{fb}}\ket{\Psi^{\text{\tiny MPS}}} - E_0$, with $H_{\text{fb}}$ the flat band Hamiltonian with the same ground state as $H$, and $E_0$ its ground state energy, as a proxy for the fidelity error $\epsilon$, which it upper bounds. Further energy optimization (as done in \cite{Mortier20}), which we did not pursue, could improve the results. Fig.~\ref{fig:numer} also shows the estimation \eqref{heuristic_scaling_N} for $\eta\sim1.3$ (resulting from optimization), which gives an idea of the scaling of our numerical results.
 
\begin{figure}[t]
	\centering
	\includegraphics[width=1\linewidth]{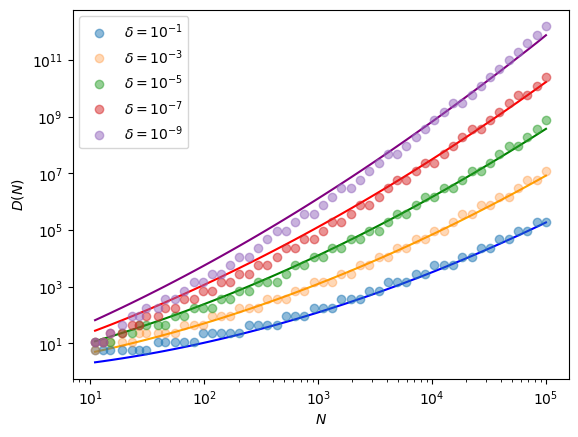}
	\caption{Bond dimension $D(N)$ vs. system size $N$ for our (not necessarily optimal) ladder GfMPS approximation to the ground state of $H$. $\delta$ is the error in energy, which upper bounds the fidelity error $\epsilon$. The lines represent \eqref{heuristic_scaling_N} with numerically optimized $\eta\sim1.3$.}
	\label{fig:numer}
\end{figure}

\textit{Toy model} --- Finally, we present a simple toy model of a family of Gaussian states, with no reference to a Hamiltonian, that are efficiently approximable by fMPS but not by GfMPS. We begin by explaining the intuition behind it. Lemmas \ref{EYM} and \ref{FGRA} highlight the differences between Gaussian and non-Gaussian truncation of a bipartite state \footnote{In fact, these lemmas correspond to the fMPS/GfMPS approximation problem in its simplest form, i.e. when $N=2$.}. This dichotomy is also reflected in different bounds for the truncation errors, $\epsilon_G$ and $\epsilon_{NG}$ resp., in terms of the $\alpha<1$ Rényi entropy. Assuming $\epsilon_G,\epsilon_{NG}\ll 1$, we have,
\begin{align}
    \epsilon_G &\lesssim S_\alpha^{1/\alpha} \chi^{-(1-\alpha)/\alpha},\\
    \epsilon_{NG} &\lesssim S_\alpha^{1/\alpha} D^{-(1-\alpha)/\alpha}.
\end{align}
For the non-Gaussian case this is Lemma 2 in \cite{Verstraete06}. The Gaussian case can be proved similarly by minimizing the entropy over all possible Gaussian entanglement spectra with a fixed Gaussian truncation error. Since $\chi\sim\log D$, we see that the bond for $\epsilon_G$ is much weaker, i.e. there exist states with small $S_\alpha$ and $\epsilon_{NG}$ but high $\epsilon_G$ for the same bond dimension \footnote{It should be noted that the $\alpha>1$ entropy bounds, such as those in \cite{Schuch08}, are the same for standard and Gaussian truncation.}. These states are usually characterized by Gaussian entanglement spectra involving many low-entangled pairs. 

We exploit this in our toy example, leveraging the addition of entangled pairs as we grow the system by making their entanglement increasingly weaker. Consider a family of states $\ket{\psi_N}$ on rings of $N$ sites obtained by distributing $\nu(N)$ entangled pairs between opposite sites in the chain, each with strength $\lambda_N$, where
\begin{equation}
    \nu(N) \sim (\log N)^{1+\beta},\quad \lambda_N \sim 1 - \dfrac{2}{\nu(N)},
\end{equation}
for some $\beta>0$. The $\alpha<1$ Rényi entropy is maximal for a bipartition of the ring into two equal halves, since all pairs contribute to give
\begin{align}
    S_\alpha(N/2) &= \nu(N) s_\alpha(\lambda_N) \sim (\log N)^{(1-\alpha)(1+\beta)},
\end{align}
which is upper bounded by $O(\log N)$ for $\alpha\in\left(\frac{\beta}{1+\beta},1\right)$. Therefore, for any $\beta>0$ an efficient fMPS approximation exists. However, for any GfMPS approximation $\ket{\tilde\psi_N}$, the error bound \eqref{maxoverlap} reads
\begin{align}
    |\bra{\psi_N}\tilde\psi_N\rangle|&\leq\left(\dfrac{1+\lambda_N}{2}\right)^{\nu(N)-\chi(N)}\\
    &\lesssim\left(1-\dfrac{1}{\nu(N)}\right)^{\nu(N)-\chi(N)}.
\end{align}
Thus if $|\bra{\psi_N}\tilde\psi_N\rangle|\geq 1-\epsilon$, then for large enough $N$ and small enough $\epsilon$ we have
\begin{align}
    \chi(N)\gtrsim \left(1-\epsilon\right)\nu(N)\sim (\log N)^{1+\beta},
\end{align}
so that $D(N)\gtrsim N^{(\log N)^{\beta}}$. This scaling can be easily seen to be sufficient (since $\chi(N)=\nu(N)$ allows for an exact representation), so that the required b.d.~to approximate $\ket{\psi_N}$ is superpolynomial but still subexponential.

\textit{Acknowledgments} --- A.F.R. would like to thank F.~Ares for introducing him to Toeplitz asymptotics a few years ago. A.F.R. is supported by the Alexander von Humboldt Foundation. I.C. acknowledges funding from the ERC Grant QUENOCOBA (No 742102) and the DFG through the DACH Lead Agency Agreement Beyond C (No 414325145).

\newpage
\appendix

\section{fMPS approximations from (spin) MPS approximations}
\label{app:fMPS}
Following \cite{Kraus10}, a fermionic MPS with $f = 1$ physical fermions and $2\chi = 2$ virtual fermions per site can be defined by means of a series of ``projectors''
\begin{equation}
    Q_j \equiv \sum_{\substack{k, l, m~=~0,1\\k\oplus l\oplus m=p}}{[A_j]^k_{lm}~(\dg a_j)^k~\alpha_j^l~\beta_j^m},
\end{equation}
where $\oplus$ denotes the sum in $\mathbb{Z}_2$, $j$ is the position index along the chain, $[A_j]^k_{lm}$ is a coefficient tensor, and  $a_j, \alpha_j,\beta_j$ are the physical, left virtual and right virtual fermionic modes at site $j$, respectively. The restriction on the indices ensures a well-defined fermionic parity $p$ for $Q_j$. One also needs to define the operators
\begin{equation}
    H_j = \dfrac{1}{\sqrt{2}}\left(\mathds{1}+ \dg \beta_j\dg \alpha_{j+1}\right),
\end{equation}
which generate entangled pairs of virtual fermions from the vacuum. The fMPS state is then defined by the action of both sets of operators on the global vacuum, followed by projecting out the virtual fermions:
\begin{equation}
    \ket{\Psi} \equiv \bra{0}_{\text{virtual}}\prod_j{Q_j}\prod_j{H_j}\ket{0}_{\text{all}}.
\end{equation}
The generalization to larger physical and/or bond dimensions is straightforward. If we now interpret the fermionic Fock space as the Hilbert space of a spin $\frac{1}{2}$ chain (as is done implicitly by the Jordan-Wigner transformation), one can see that $\ket{\Psi}$ can be obtained from a standard ``spin'' MPS, with local tensors that coincide with $[A_j]^k_{lm}$, possibly up to signs. Consequently, the set of 1d fMPS states coincides with those obtained from MPS that satisfy: (i) their bond dimension is a power of 2, and (ii) each tensor has a well-defined parity.

In the case we are interested in, the Jordan-Wigner transformation maps the fermionic Hamiltonian $H$ from \eqref{Ham} to the XX model spin chain,
\begin{equation}
    H_{XX} = \dfrac{1}{2}\sum_{j}{\left[X_jX_{j+1} + Y_jY_{j+1}\right]},
\end{equation}
for which the results from \cite{Verstraete06} apply, since all its Rényi entropies grow logarithmically (see Eq. \eqref{ent_sca}). The existence of an MPS approximation with polynomial bond dimension then follows. Additionally, this spin MPS has a global parity symmetry represented by the action of the product of $Z$ operators on each physical spin. To see that this implies the existence of an fMPS approximation to the ground state with polynomial b.d., we show that we can make the spin MPS satisfy conditions (i) and (ii) without a substantial increase in bond dimension. 

Let $D$ be the bond dimension of the spin MPS, and $[B_j]^k_{lm}$ denote its tensors. Then there is $q$ such that $D\leq 2^q<2D$, and we can embed the $D\times D\times 2$-dimensional MPS tensors into $2^q\times 2^q\times 2$-dimensional ones without changing the state just by padding each tensor with zeros, i.e. by extending the range of $l,m$ to $2^q$, and defining the additional tensor elements as 0. Thus we can satisfy (i) with less than twice the original bond dimension. To get condition (ii), we modify the tensors by adding an additional pair of indices $l', m'\in\{0,1\}$ such that
\begin{equation}
[B_j]^k_{ll'mm'}\equiv[B_j]^k_{lm}\delta_{k\oplus|l|\oplus|m|\oplus l' \oplus m'},
\end{equation}
where $|l|,|m|$ denotes the parity of the corresponding index. The new tensors are individually parity symmetric, so that (ii) holds, and they can be seen to generate the same state as the original ones (which is only possible because said state has global parity symmetry). Thus it follows that there exists an fMPS approximation to the ground state of $H$, with less than four times the bond dimension of the spin MPS approximation to the XX model ground state, which therefore grows at most polynomially.

\section{Gaussian bipartite state overlaps and Gaussian entanglement spectrum}
\label{app:FGRA}
In this Appendix we prove Lemma \ref{FGRA} as a corollary to a more general result. For convenience, we work here in the Majorana representation, introducing Majorana operators
\begin{eqnarray}
    &c_{j,1}\equiv \phdg a_j + \dg a_j,\qquad c_{j,2}\equiv i(\phdg a_j-\dg a_j),\\
    &\{c_{j,s},c_{j',s'}\} = 2\delta_{j,j'}\delta_{s,s'}
\end{eqnarray}
so that the (Majorana) correlation matrix is a real, skew-symmetric matrix defined as
\begin{equation}
    [\Gamma]_{js,j's'} \equiv \dfrac{i}{2}\ev{[c_{j,s},c_{j',s'}]}\qquad s,s'= 1,2,
\end{equation}
which satisfies $\Gamma\trans\Gamma\leq \mathds{1}$, with equality for pure states. The overlap of two Gaussian states of $2N$ Majorana fermions can be computed from their correlation matrices $\Gamma, \gamma$, \cite{Bravyi05}
\begin{equation}
\left|\bra{\Gamma}\gamma\rangle\right|^2 = 2^{-N}\sqrt{\det(\mathds{1}-\Gamma\gamma)}.
\label{overlap}
\end{equation}
Finally, we introduce the \textit{Gaussian singular value decomposition} (Gaussian SVD) \cite{Botero04}, which states that, given $\Gamma$ the correlation matrix of a pure bipartite Gaussian state on two subsystems of $2n$ Majorana fermions each \footnote{Once more, in case one of the subsystems is bigger than the other, additional blocks representing the remaining fermions in product states have to be added. This does not affect the rest of the discussion.}, we can find $O,Q\in O(2n)$ that block diagonalize $\Gamma$,
\begin{equation}
    \Gamma = (O\oplus Q)\left(\bigoplus_{j=1}^n{W(\theta_j)}\right)(O\oplus Q)^{\text{T}},
    \label{GaussianSVD}
\end{equation}
where the $4\times 4$ blocks are given by
\begin{equation}
    W(\theta)\equiv \left(\begin{array}{cc}\cos\theta J & \sin\theta \mathds{1} \\ -\sin\theta \mathds{1} & -\cos\theta J\end{array}\right), \quad J \equiv \left(\begin{array}{cc}0 & 1 \\ -1 & 0\end{array}\right),
    \label{defW}
\end{equation}
and the $\theta_j$ can all be chosen to lie on the first quadrant, $0\leq\theta_j\leq\frac{\pi}{2}$, in which case we have $\cos\theta_j = |\lambda_j|$. The $\theta_j$ are another possible way to write the Gaussian entanglement spectrum. Indeed, $W(\theta)$ is the correlation matrix of a pair of fermionic modes, which is in a product state for $\sin{\theta} = 0$ ($|\lambda|=1$) and maximally entangled whenever $\cos{\theta}=0$ ($\lambda=0$). The number $r$ of entangled pairs ($\sin{\theta} \neq 0$) is what we called in the main text the \textit{Gaussian rank}.
Now we are ready to state and prove the following
\begin{theorem}
\label{thm_overlap}
Let $\ket{\Gamma},\ket{\tilde\Gamma}$ be pure bipartite Gaussian states on two subsystems of $2n$ Majorana fermions. Let their correlation matrices be $\Gamma, \tilde\Gamma$ and their Gaussian entanglement spectra be given by $\{\theta_j\}_{j=1}^n,\{\tilde\theta_j\}_{j=1}^n$ respectively.
Then,
\begin{equation}
    \left|\bra{\Gamma}\tilde\Gamma\rangle\right|^2 \leq \max_{\sigma\in S_n}\prod_{i=1}^n{\cos^2\left(\dfrac{\theta_i-\tilde\theta_{\sigma(i)}}{2}\right)},
    \label{maxoverlap_app}
\end{equation}
and the bound is tight (it is reached for some $\Gamma, \tilde\Gamma$).
\end{theorem}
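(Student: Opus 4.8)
The plan is to convert the overlap into a determinant via \eqref{overlap}, bring in the Gaussian SVD \eqref{GaussianSVD} of both states, reduce the statement to an optimization over a compact orthogonal group, and analyze that optimization.

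\emph{Reduction.} Write $\Gamma=(O\oplus Q)M(O\oplus Q)^{\text T}$ and $\tilde\Gamma=(\tilde O\oplus\tilde Q)\tilde M(\tilde O\oplus\tilde Q)^{\text T}$, with $M=\bigoplus_{j=1}^n W(\theta_j)$ and $\tilde M=\bigoplus_{j=1}^n W(\tilde\theta_j)$. Feeding this into \eqref{overlap} and using invariance of $\det$ under cyclic moves and transposition gives $|\bra{\Gamma}\tilde\Gamma\rangle|^4=2^{-4n}\det(\mathds{1}-MK\tilde MK^{\text T})$, where $K\equiv(O^{\text T}\tilde O)\oplus(Q^{\text T}\tilde Q)$ is block diagonal with respect to the bipartition, i.e. $K\in O(2n)\times O(2n)$. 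Conversely, for any such $K$ the matrix $K\tilde MK^{\text T}$ is the correlation matrix of a pure Gaussian state with entanglement spectrum still $\{\tilde\theta_j\}$ (its $\mc R$-block has the same singular values as $\tilde M_{\mc R}$), and likewise one may take $\Gamma=M$; so the theorem is equivalent to
\[
  \max_{K\in O(2n)\times O(2n)}\det\!\big(\mathds{1}-MK\tilde MK^{\text T}\big)=2^{4n}\max_{\sigma\in S_n}\prod_{j=1}^n\cos^4\!\Big(\tfrac{\theta_j-\tilde\theta_{\sigma(j)}}{2}\Big),
\]
the left maximum being attained by compactness.

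\emph{Single block and tightness.} A direct $4\times4$ computation (using $J^2=-\mathds{1}$) gives $\det(\mathds{1}_4-W(\theta)W(\tilde\theta))=16\cos^4\!\big(\tfrac{\theta-\tilde\theta}{2}\big)$. Choosing $K$ to be a permutation matrix implementing a bijection $j\mapsto\sigma(j)$ identically on both subsystems makes $MK\tilde MK^{\text T}$ block diagonal, so the determinant factorizes and $|\bra{\Gamma}\tilde\Gamma\rangle|^2=\prod_j\cos^2\!\big(\tfrac{\theta_j-\tilde\theta_{\sigma(j)}}{2}\big)$. Optimizing over $\sigma$ this establishes the ``$\ge$'' half of \eqref{maxoverlap_app}, i.e. tightness: the optimal $\ket{\tilde\Psi}$ is the canonical state built, in the Schmidt basis of $\ket{\Gamma}$, from the pairs $W(\tilde\theta_{\sigma(j)})$.

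\emph{Upper bound --- the crux.} Since $\det W(\theta)=1$ (it deforms continuously to $\det W(0)=1$), one has $\det(\mathds{1}-MK\tilde MK^{\text T})=\det(M+K\tilde MK^{\text T})=\mathrm{Pf}(M+K\tilde MK^{\text T})^2\ge 0$, so the left side is a genuine square. Reading $M$ and $N\equiv K\tilde MK^{\text T}$ as compatible complex structures on $\mathbb R^{4n}$, their simultaneous normal form reduces the overlap to a product of single-block contributions as in Step 2, giving $|\bra{\Gamma}\tilde\Gamma\rangle|^2=\prod_{l=1}^n\cos^2(\psi_l/2)$ for angles $\psi_l\in[0,\pi]$ encoding the relative position of $M$ and $N$. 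Because $K$ commutes with the bipartition reflection $P_{\mc R}-P_{\bar{\mc R}}$, the relative positions of $M$ and of $N$ with respect to the bipartition are governed by $\{\theta_j\}$ and $\{\tilde\theta_j\}$ respectively, so the inequality we need, $\prod_l\cos^2(\psi_l/2)\le\max_\sigma\prod_j\cos^2\!\big(\tfrac{\theta_j-\tilde\theta_{\sigma(j)}}{2}\big)$, is a ``triangle inequality'' among three relative positions --- and I expect this to be the main obstacle. I would prove it by a stationarity argument directly on the determinant: at a maximizer of $\det(\mathds{1}-MK\tilde MK^{\text T})$, varying $K\mapsto e^{tX}K$ along skew-symmetric $X=X_{\mc R}\oplus X_{\bar{\mc R}}$ forces both diagonal blocks of $N(\mathds{1}-MN)^{-1}M-(\mathds{1}-MN)^{-1}MN$ (with respect to the $\mc R\oplus\bar{\mc R}$ splitting) to be symmetric; one then checks that this pins the optimal $N$ to be block-aligned with $M$ relative to the bipartition --- equivalently $K$ may be taken a block permutation, up to within-pair $O(2)\times O(2)$ rotations that an elementary computation shows leave the value unchanged --- at which point the maximum equals the stated right-hand side. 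Lemma~\ref{FGRA} then follows by setting $\tilde\theta_j=0$ for all but $r$ indices and matching the $r$ nonzero angles to the $r$ largest $\theta_j$.
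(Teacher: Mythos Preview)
Your approach is the same as the paper's: reduce to maximizing $\det(\mathds{1}-MK\tilde MK^{\text T})$ (equivalently $\det(\Gamma+\tilde\Gamma)$, using $\Gamma^{-1}=-\Gamma$) over the compact group $O(2n)\times O(2n)$, establish tightness by the block computation, and then analyze the stationarity condition at a maximizer. The reduction and tightness steps are correct and essentially identical to the paper's.

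The gap is in the crux. Your stationarity condition is correct --- in the paper's notation it says the diagonal blocks of $\mathcal{B}\equiv[\tilde\Gamma,(\Gamma+\tilde\Gamma)^{-1}]$ vanish --- but the assertion ``one then checks that this pins the optimal $N$ to be block-aligned with $M$'' is exactly where the nontrivial work lies, and it does not follow from stationarity alone. The paper closes this via two ingredients you have not identified. First, it observes that $\mathcal{B}$ also \emph{anticommutes} with $\Gamma$, since $\{\Gamma,\mathcal{B}\}=-[\Gamma^2,(\Gamma+\tilde\Gamma)^{-1}]=0$ by purity; combined with the vanishing diagonal blocks and a genericity assumption (all $\theta_i$ distinct and in $(0,\pi/2)$), this forces the off-diagonal block of $\mathcal{B}$ to be $\bigoplus_i b_i J$. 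Second, it rewrites $\mathcal{B}(\Gamma+\tilde\Gamma)=\tilde\Gamma-\Gamma$ to get $\tilde\Gamma=(\mathds{1}-\mathcal{B})^{-1}\Gamma(\mathds{1}-\mathcal{B})=\bigoplus_i W(\theta_i+\beta_i)$, which is the desired alignment; the nongeneric case follows by continuity. Your ``complex structure / simultaneous normal form'' paragraph is a detour: two complex structures on $\mathbb{R}^{4n}$ can always be simultaneously normalized, but the whole difficulty is to show this normalization can be done \emph{compatibly with the bipartition}, and that is precisely what the anticommutation step delivers.
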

\begin{proof}
We follow a strategy inspired by Thm.~VI.7.1 in \cite{Bhatia96}. $\Gamma, \tilde\Gamma$ will be of the form
\begin{align}
    \Gamma = (O\oplus Q)\left(\bigoplus_{j=1}^n{W(\theta_j)}\right)(O\oplus Q)^{\text{T}},\\
    \tilde\Gamma = (\tilde O\oplus \tilde Q)\left(\bigoplus_{j=1}^n{W(\tilde \theta_j)}\right)(\tilde O\oplus \tilde Q)^{\text{T}},
\end{align}
for some $O, \tilde O, Q, \tilde Q\in O(2n)$.
To begin with, we shall assume that 
\begin{align}
    \theta_i,\tilde\theta_i\in\left(0,\frac{\pi}{2}\right), \qquad & \forall i,\\
    \theta_i \neq \theta_j, \tilde\theta_i \neq \tilde\theta_j, \qquad &  i\neq j.
\end{align}
We are seeking to upper bound
\begin{equation}
    \left|\bra{\Gamma}\tilde\Gamma\rangle\right|^2 = 2^{-2n}\sqrt{\det(\mathds{1}-\Gamma\tilde\Gamma)} = 2^{-2n}\sqrt{\det{(\Gamma + \tilde\Gamma)}},
\end{equation}
where we have used \eqref{overlap} and the purity condition ${\Gamma^{-1} = -\Gamma}$. In other words, our problem consists in determining
\begin{equation}
    \max_{\substack{O, Q, \\\tilde O, \tilde Q}}\det((O\oplus Q)W(O\oplus Q)^{\text{T}} + (\tilde O\oplus \tilde Q)\tilde W (\tilde O\oplus \tilde Q)^{\text{T}})
\end{equation}
where
\begin{equation}
    W \equiv \left(\bigoplus_{j=1}^n{W(\theta_j)}\right),\qquad\tilde W \equiv \left(\bigoplus_{j=1}^n{W(\tilde \theta_j)}\right),
\end{equation}
We know the maximum exists since we are optimizing over a closed manifold. Further, we can assume $O,Q = \mathds{1}$, which amounts to fixing the mode basis on which we express our states and does not affect their overlap.

Assume that $(\Gamma, \tilde\Gamma)$ constitute an extreme point of the target function. This implies that no infinitesimal change in the matrix $\tilde\Gamma$ will change the overlap, that is,
\begin{equation}
    \left.\dfrac{d}{dt}\det(\Gamma + e^{t(o\oplus q)}\tilde\Gamma e^{-t(o\oplus q)})\right|_{t=0} = 0,\quad  \forall o,q\in\mathfrak{o}(2n).
\end{equation}
By differentiating, and then using $\det(\Gamma + \tilde\Gamma)>0$ (since we are looking for maxima) and the cyclicity of the trace, we arrive at
\begin{align}
    \det(\Gamma + \tilde\Gamma)\tr{\left((\Gamma + \tilde\Gamma)^{-1}[o\oplus q, \tilde\Gamma]\right)} &= 0,\nonumber\\
    \tr{\left((o\oplus q), [\tilde\Gamma, (\Gamma + \tilde\Gamma)^{-1}]\right)} &= 0.\label{comm}
\end{align}
Let $[\tilde\Gamma, (\Gamma + \tilde\Gamma)^{-1}]$, which is skew-symmetric, have the following block structure (according to the bipartition of our states),
\begin{equation}
    [\tilde\Gamma, (\Gamma + \tilde\Gamma)^{-1}]\equiv \left(\begin{array}{cc}
        A & B \\
        -\trans B & D
    \end{array}\right),
\end{equation}
with $A = -\trans A, D = -\trans D$. Then condition \eqref{comm} implies 
\begin{equation}
    \tr{\left(\left(\begin{array}{cc}
        o & 0 \\
        0 & q
    \end{array}\right)\left(\begin{array}{cc}
        A & B \\
        -\trans B & D
    \end{array}\right)\right)} =\tr{(oA+qD)}= 0,
\end{equation}
which holds for every skew-symmetric $o,q$. This forces $A = D = 0$ and we conclude 
\begin{equation}
    [\tilde\Gamma, (\Gamma + \tilde\Gamma)^{-1}] = \left(\begin{array}{cc}
        0 & B \\
        -\trans B & 0
    \end{array}\right).
    \label{introB}
\end{equation}
We denote $\mathcal{B}\equiv[\tilde\Gamma, (\Gamma + \tilde\Gamma)^{-1}] = -[\Gamma, (\Gamma + \tilde\Gamma)^{-1}]$ for brevity. We proceed by noting
\begin{align}
    \{\Gamma, [\Gamma, (\Gamma + \tilde\Gamma)^{-1}]\} &= [\Gamma^2, (\Gamma + \tilde\Gamma)^{-1}] \\
    &= [-\mathds{1},(\Gamma + \tilde\Gamma)^{-1}] = 0,
\end{align}
where $\{,\}$ denotes the anticommutator. Thus
\begin{equation}
    \{\Gamma, \mathcal{B}\} = 0,
    \label{anticomm}
\end{equation}
which further constrains the form of $B$. Indeed, because of our assumption on $O,Q$, we have
\begin{equation}
    \Gamma = \left(\begin{array}{cc}
        \Gamma_{11} & \Gamma_{12} \\
        -\trans\Gamma_{12}  & \Gamma_{22}
    \end{array}\right),
\end{equation}
where
\begin{align}
    \Gamma_{11} &= \bigoplus_{i=1}^{n}\cos\theta_i\,J = -\Gamma_{22},\\
    \Gamma_{12} &= \bigoplus_{i=1}^{n}\sin\theta_i\,\mathds{1}.
\end{align}
Condition \eqref{anticomm} can be seen to imply
\begin{equation}
    [\Gamma_{11}, B] = 0, \quad \Gamma_{12}\trans B = - B \Gamma_{12}, \quad \Gamma_{12} B = - \trans B  \Gamma_{12},
\end{equation}
which, thanks to our assumptions about the $\theta_i$, is enough to force
\begin{equation}
    B = \bigoplus_{i=1}^n{b_i J},
\end{equation}
for some $b_i\in\mathbb{R}$. Now we go back to \eqref{introB} and write
\begin{align}
    [\tilde \Gamma, (\Gamma + \tilde\Gamma)^{-1}] &= \tilde \Gamma (\Gamma + \tilde\Gamma)^{-1} - (\Gamma + \tilde\Gamma)^{-1} \tilde \Gamma \nonumber \\
    &= \tilde \Gamma (\Gamma + \tilde\Gamma)^{-1} - \Gamma (\Gamma + \tilde\Gamma)^{-1} \nonumber \\
    &=(\tilde \Gamma - \Gamma) (\Gamma + \tilde\Gamma)^{-1},
\end{align}
hence,
\begin{align}
    \mathcal{B}(\Gamma + \tilde\Gamma) &= (\tilde \Gamma - \Gamma) \implies \nonumber \\
    \implies\tilde\Gamma &= (\mathds{1} - \mathcal{B})^{-1} \Gamma (\mathds{1} - \mathcal{B}),
\end{align}
where the inverse of $\mathds{1} - \mathcal{B}$ exists since $\det(\mathds{1} - \mathcal{B}) = \det(\mathds{1}+\trans B B)>0$. Defining $\beta_i \equiv 2\arctan{b_i}$, the expression above yields
\begin{align}
    \tilde\Gamma = \bigoplus_{i=1}^n{W(\theta_i + \beta_i)},
\end{align}
which can be checked to be consistent with all the conditions derived before, in particular
\begin{equation}
    [W(\theta+\beta),(W(\theta)+W(\theta+\beta))^{-1}] = \tan\frac{\beta}{2}\left(\begin{array}{cc}
       0 & J\\
       J & 0
    \end{array}\right).
\end{equation}
In conclusion, the pairs $\Gamma, \tilde\Gamma$ with maximal overlap for fixed spectra satisfy that $\Gamma$ and $\tilde\Gamma$ are simultaneously singular-value-decomposable, by which we mean there exists a basis in which
\begin{equation}
    \Gamma = \bigoplus_{i=1}^n{W(\theta_i)},\qquad \tilde\Gamma = \bigoplus_{i=1}^n{W(\tilde\theta_{\sigma(i)})},
\end{equation}
for some permutation $\sigma$. The statement of the theorem then follows from simply computing the overlap of these states, and extends to the case of general $\{\theta_i,\tilde\theta_i\}$ by a continuity argument.
\end{proof}
The case described in Lemma \ref{FGRA} follows as a corollary to the previous theorem by forcing all but $r$ of the $\tilde \theta_i$ to be equal to 0. It can then be seen that the optimal choice for the remaining ones is for them to equal the $r$ largest $\theta_i$ (the most entangled pairs) and for the permutation $\sigma$ to match them accordingly, so that the maximum overlap is given by \eqref{maxoverlap}, once we express the Gaussian entanglement spectrum back in terms of $\lambda_j$. The bound is tight since an approximation with such an overlap can be obtained by performing the Gaussian SVD of the target state and setting all but the $r$ largest $\theta_i$ to 0 (i.e., all but the $r$ smallest $|\lambda_j|$ to 1).

\section{Proof of Lemma \ref{property}}
\label{app:thm_proof}
As we advanced in the main text, we begin by proving a corresponding result in the thermodynamic limit:

\begin{lemma}
For the ground state of \eqref{Ham} on an \emph{infinite} chain, let $\mathcal{I}_{L,\infty}(\mu)$ be the number of eigenvalues $\lambda$ from the Gaussian entanglement spectrum of an interval of size $L$ that satisfy $|\lambda|<\mu$, and let $c>0$. Then there exists $\mu<1$ such that
\begin{equation}
    \mathcal{I}_{L,\infty}(\mu) > c\log L,
\end{equation}
as $L\to\infty$.
\label{property_inf}
\end{lemma}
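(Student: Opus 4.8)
The plan is to identify the reduced correlation matrix as a Toeplitz matrix and then combine the Fisher--Hartwig asymptotics of its spectral determinant with an elementary counting inequality. On the infinite chain the position-space correlation matrix is $C_{ij} = \frac{\sin(\frac{\pi}{2}(i-j))}{\pi(i-j)}$, so the restriction to an interval of $L$ consecutive sites is the $L\times L$ Toeplitz matrix $C_L = T_L(n)$ whose symbol $n(k)$ is the characteristic function of $(-\pi/2,\pi/2)$. Hence $V_L \equiv 2C_L - \mathds{1} = T_L(g)$, where $g = 2n-1$ takes only the values $\pm1$ and has exactly two jump discontinuities, at $k = \pm\pi/2$. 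The Gaussian entanglement spectrum $\{\lambda_j\}_{j=1}^{L}$ is precisely the spectrum of $V_L$, so $\mathcal{I}_{L,\infty}(\mu)$ counts the eigenvalues of $T_L(g)$ lying in $(-\mu,\mu)$.

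First I would record the trace asymptotics that underlie the logarithmic entropy scaling \eqref{ent_sca}. For a test function $h$ in a suitable smoothness class (it will suffice that $h\in C^\infty([-1,1])$), the Fisher--Hartwig expansion of the Toeplitz determinant $\det(\xi\mathds{1}-V_L)$ as a function of the spectral parameter $\xi\notin[-1,1]$, fed into the contour representation $\tr\,h(V_L) = \frac{1}{2\pi i}\oint h(\xi)\,\partial_\xi\log\det(\xi\mathds{1}-V_L)\,d\xi$, yields as $L\to\infty$
\[
  \tr\,h(V_L) \;=\; \frac{L}{2\pi}\int_{-\pi}^{\pi} h\big(g(k)\big)\,dk \;+\; A(h)\,\log L \;+\; o(\log L),
\]
where the two jumps of $g$ contribute equally and
\[
  A(h) \;=\; \frac{2}{\pi^2}\int_{-1}^{1}\frac{h(\lambda) - \tfrac{1+\lambda}{2}h(1) - \tfrac{1-\lambda}{2}h(-1)}{1-\lambda^2}\,d\lambda .
\]
(Taking for $h$ the $S_1$ entanglement-entropy function recovers the coefficient $\tfrac13$ of \eqref{ent_sca} at $\alpha=1$, which fixes the normalization; cf.\ \cite{Jin04, Calabrese04}.) The crucial structural facts are: if $h(\pm1)=0$ the $O(L)$ term vanishes, because $g$ only takes the values $\pm1$; then $A(h) = \frac{2}{\pi^2}\int_{-1}^{1} h(\lambda)/(1-\lambda^2)\,d\lambda \ge 0$ whenever $h\ge0$, strictly positive once $h$ is positive somewhere on $(-1,1)$, and, because of the $\lambda$-integrable but diverging weight $1/(1-\lambda^2)$, it can be made arbitrarily large by widening the support of $h$ towards the endpoints $\pm1$.

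Then I would close the argument by tailoring $h$ to the given $c$. Fix $c>0$, pick $\mu'\in(0,1)$ with $\frac{2}{\pi^2}\log\frac{1+\mu'}{1-\mu'} > c$, then pick $\mu\in(\mu',1)$ and a smooth even function $h_\mu$ with $0\le h_\mu\le1$, $h_\mu\equiv1$ on $[-\mu',\mu']$ and $\operatorname{supp}h_\mu\subset(-\mu,\mu)$; in particular $h_\mu(\pm1)=0$. Then
\[
  A(h_\mu) \;\ge\; \frac{2}{\pi^2}\int_{-\mu'}^{\mu'}\frac{d\lambda}{1-\lambda^2} \;=\; \frac{2}{\pi^2}\log\frac{1+\mu'}{1-\mu'} \;>\; c ,
\]
so $\tr\,h_\mu(V_L) = A(h_\mu)\log L + o(\log L) > c\log L$ for all large $L$. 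Finally, since $0\le h_\mu\le \mathds{1}_{(-\mu,\mu)}$,
\[
  \mathcal{I}_{L,\infty}(\mu) \;=\; \#\{\,j:\ |\lambda_j|<\mu\,\} \;\ge\; \sum_{j=1}^{L} h_\mu(\lambda_j) \;=\; \tr\,h_\mu(V_L) \;>\; c\log L
\]
for $L$ large, which is Lemma \ref{property_inf}.

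The main obstacle is making the trace asymptotics fully rigorous: one needs a version of the Fisher--Hartwig expansion for Toeplitz determinants with two jump singularities that is uniform enough in the spectral parameter $\xi$ to survive the contour integration, and one must verify that the remainder is genuinely $o(\log L)$ rather than merely subleading in a formal expansion -- this is where I would lean on the rigorous proofs of the Fisher--Hartwig conjecture and their established application to the XX-chain entanglement computations. A secondary, minor point is that the cited asymptotics must hold beyond the real-analytic class, since $h_\mu$ is only $C^\infty$; this is available, and in any case one could instead work with a real-analytic minorant of $\mathds{1}_{(-\mu,\mu)}$ at the cost of a slightly more careful lower bound on $A(h)$. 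As a consistency check, the heuristic CFT spectrum \eqref{cftspec} has spacing $\sim \pi^2/\log\ell$ in the relevant variable, so the number of $\lambda_j$ with $|\lambda_j|<\mu$ should grow like $\tfrac{4}{\pi^2}\argtanh(\mu)\log\ell$, in agreement with the bound obtained above.
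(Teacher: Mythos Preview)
Your proposal is correct and follows essentially the same route as the paper: identify $V_L$ as a Toeplitz matrix, use the contour representation $\sum_j f(\lambda_j)=\frac{1}{2\pi i}\oint f(z)\,\partial_z\log D_L(z)\,dz$ together with the Fisher--Hartwig asymptotics for $\log D_L(z)$ to get $\sum_j f(\lambda_j)\sim L\,\tfrac{f(1)+f(-1)}{2}+\tfrac{2\log L}{\pi^2}\int_{-1}^1 \tfrac{f(\lambda)}{1-\lambda^2}\,d\lambda$, and then lower bound $\mathcal I_{L,\infty}(\mu)$ by choosing a minorant of the indicator $\mathds 1_{(-\mu,\mu)}$ whose $\log L$ coefficient diverges as $\mu\to1$. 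The one substantive difference is precisely the point you flag at the end: because the contour method requires $f$ to be holomorphic in a neighbourhood of $[-1,1]$, the paper does not use a $C^\infty$ bump but the explicit rational minorant
\[
  f_\mu(\lambda)=\frac{(1-\lambda^2)(\mu^2-\lambda^2)}{(2-\mu^2-\lambda^2)^2},
\]
which is holomorphic on a disk containing $[-1,1]$, satisfies $f_\mu\le \mathds 1_{[-\mu,\mu]}$ on $[-1,1]$, and has $f_\mu(\pm1)=0$; the resulting coefficient $\tfrac{2}{\pi^2}\int_{-1}^1 f_\mu(\lambda)/(1-\lambda^2)\,d\lambda$ still diverges as $\mu\to1$, so your ``real-analytic minorant'' fix is exactly what the paper implements.
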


\begin{proof}
Let $C_{L,\infty}$ be the correlation matrix of the interval of length $L$, and $V_{L,\infty}\equiv 2C_{L,\infty}-\mathds{1}$. Call $D_L(z)\equiv \det(z\mathds{1} - V_{L,\infty})$, and let $f(z)$ be a holomorphic function on a domain that includes the interval $[-1,1]$ where all the eigenvalues $\{\lambda_i\}$ of $V_{L,\infty}$ lie. Since we have
\begin{equation}
    D_L(z) = \prod^L_{i = 1}{(z - \lambda_i)},
\end{equation}
it follows from Cauchy's integral formula that
\begin{equation}
    \sum_{i=1}^L{f(\lambda_i)} = \dfrac{1}{2\pi i}\int_{\mathcal C}{dz f(z)\dfrac{d}{dz}\log D_L(z)},
    \label{complexint}
\end{equation}
where $\mathcal C$ is a contour within the domain of holomorphicity of $f$ encircling the interval $[-1,1]$. Since $V_{L,\infty}$ is a Toeplitz matrix with an adequate symbol, the asymptotic value of $D_L(z)$ as $L\to\infty$ is given to us by the \textit{Fisher-Harwig conjecture}, in particular by a subcase thereof which was proven by Basor \cite{Basor79}. This property has been exploited for various computations in the XX model \cite{Jin04}. In our particular case, it tells us 
\begin{align}
    \log D_L(z)&\sim L\log\sqrt{z^2-1} \nonumber\\ &+ \dfrac{\log L}{2\pi^2}\left[\log{\left(\dfrac{z+1}{z-1}\right)}\right]^2 
\end{align}
where by $\sim$ we mean both sides are equal up to $O(1)$ terms that do not grow with $L$. The right hand side of \eqref{complexint} then reads
\begin{align}
    &\oint{dz f(z)\dfrac{d}{dz}\log D_L(z)}\sim\\
    &\qquad\sim \dfrac{L}{2} \oint{dz f(z)\left(\dfrac{1}{z-1}+\dfrac{1}{z+1}\right)}\\
    &\qquad+\dfrac{2\log L}{\pi^2} \oint{dz f(z)\log\left(\dfrac{z-1}{z+1}\right)\dfrac{1}{z^2-1}}
\end{align}
Using complex variable techniques, this finally results in 
\begin{equation}
    \sum_{i=1}^L{f(\lambda_i)} \sim L\dfrac{f(-1)+f(1)}{2}+\dfrac{2\log L }{\pi^2}\int^{1}_{-1}{d\lambda\dfrac{f(\lambda)}{1-\lambda^2}},
\end{equation}
which is a strong constraint on the distribution of eigenvalues. It hints at the fact that they are asymptotically distributed with a density $2\log L/(\pi^2(1-\lambda^2))$ along the interval $[-1,1]$, with the rest of them (a number of order $L$) eventually clumping at the endpoints.
We are now in a position to bound the function $\mathcal I_{L,\infty}(\mu)$. It can be written as a sum over eigenvalues, with $f$ the indicator function of the interval $[-\mu, \mu]$, which of course cannot be extended to a holomorphic function. Still, to get intuition, the result would be
\begin{equation}
    \mathcal I_{L,\infty}(\mu)~\text{``}=\text{''}~\dfrac{2\log L }{\pi^2}\int^{\mu}_{-\mu}{d\lambda\dfrac{1}{1-\lambda^2}} = \dfrac{4\log L \argtanh\mu}{\pi^2},
\end{equation}
and since the coefficient of $\log L$ diverges as $\mu\to 1$, we would have the result. To make a proper statement, we use the functions
\begin{equation}
    f_\mu(\lambda) \equiv \dfrac{(1-\lambda^2)(\mu^2-\lambda^2)}{(2-\mu^2-\lambda^2)^2},
\end{equation}
which lower bound the indicator function $\Theta(\mu-|\lambda|)$ and are holomorphic on a disk containing $[-1,1]$. Thus we can assure,
\begin{align}
    \mathcal{I}_{L,\infty}(\mu) \geq \sum_{i=1}^L{f_\mu(\lambda_i)} \sim \dfrac{2\log L }{\pi^2}\int^{1}_{-1}{d\lambda\dfrac{f_{\mu}(\lambda)}{1-\lambda^2}}~~~~~~~~~~\nonumber\\
    =\dfrac{4\log L }{\pi^2}\left(\dfrac{\argtanh{\left(\dfrac{1}{\sqrt{2-\mu^2}}\right)}}{(2-\mu^2)^{3/2}}-\dfrac{1}{2-\mu^2}\right)
\end{align}
and since the coefficient of $\log L$ on the rhs still diverges as $\mu\to 1$, the result follows.
\end{proof}

Now we will show that the spectra of the correlation matrices for the finite and infinite chains are close enough that Lemmas \ref{property} and \ref{property_inf} imply each other. Denote the Frobenius norm by $\|\cdot\|_2$ and the Schatten 1-norm (or trace norm) by $\|\cdot\|_1$. We then have,
\begin{lemma}
Let $C_{L,N},C_{L,\infty}$ be the correlation matrices for an interval of $L$ sites of a finite chain of $N$ sites and an infinite chain, respectively, and let $L/N = \varphi$ stay constant as we increase $N$. Then $\|C_{L,N}-C_{L,\infty}\|_1$ is bounded by a constant.
\label{norm_bound}
\end{lemma}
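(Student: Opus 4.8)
Proof plan for Lemma \ref{norm_bound}.

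The plan is to compute the entries of $C_{L,N}$ and $C_{L,\infty}$ explicitly, recognise their difference $M:=C_{L,N}-C_{L,\infty}$ as a small $L\times L$ symmetric Toeplitz matrix generated by a function that is \emph{analytic} on a region strictly containing the rescaled index range, and then expand that function in a power series to write $M$ as a rapidly convergent sum of \emph{exactly low-rank} matrices whose trace norms are easy to control. We may assume $0<\varphi<1$ (in the application to Lemma \ref{property} one can even take $L\le N/2$, since the Gaussian entanglement spectra of an interval and its complement coincide up to extra $\pm1$'s), and we take $N\equiv 2\bmod 4$, the case $N\equiv0\bmod4$ being covered by the main text's convention on zero modes. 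Both correlation matrices are then translation invariant: $(C_{L,\infty})_{ij}=c^\infty(i-j)$ with $c^\infty(m)=\tfrac{\sin(\pi m/2)}{\pi m}$ (the inverse Fourier transform of $n_k$), and $(C_{L,N})_{ij}=c^N(i-j)$ with $c^N(m)=\tfrac1N\sum_{|k|<\pi/2}e^{\mathrm i k m}=\tfrac{\sin(\pi m/2)}{N\sin(\pi m/N)}$ (a Dirichlet-kernel geometric sum over the $N/2$ occupied momenta). Hence $M$ is the symmetric Toeplitz matrix with symbol $g(m)=c^N(m)-c^\infty(m)=\tfrac1N\sin\!\big(\tfrac{\pi m}{2}\big)\,\phi\!\big(\tfrac{\pi m}{N}\big)$, where $\phi(y):=\tfrac1{\sin y}-\tfrac1y$ is analytic on $\{|y|<\pi\}$, its nearest singularities being the simple poles at $y=\pm\pi$. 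The crucial point is that for $1\le i,j\le L$ we have $\big|\tfrac{\pi(i-j)}{N}\big|\le\tfrac{\pi(L-1)}{N}<\pi\varphi<\pi$, so $\phi$ is only ever evaluated on a fixed compact subset of its domain of analyticity — this is exactly where $\varphi<1$ enters.

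Here lies the one real difficulty: the crude bounds do not yield a constant. The inequality $\|M\|_1\le\sqrt L\,\|M\|_2$, together with the uniform estimate $|g(m)|\le C_\varphi/N$, gives only $\|M\|_1=O(\sqrt L)$; the entrywise bound $\|M\|_1\le\sum_{i,j}|M_{ij}|$ is worse; and attacking the aliasing identity $c^N(m)=\sum_n c^\infty(m+nN)$ term by term runs into a divergent harmonic sum over $n$. The fix is to expand $\phi(y)=\sum_{q\ge1}c_q y^q$, which converges with radius $\pi$, so that, uniformly over $1\le i,j\le L$,
\begin{equation}
M=\sum_{q\ge1}\frac{\pi^q c_q}{N^{q+1}}\,S_q,\qquad (S_q)_{ij}:=\sin\!\Big(\tfrac{\pi(i-j)}{2}\Big)(i-j)^q .
\end{equation}
Each $S_q$ is the entrywise product of $\big[\sin\tfrac{\pi(i-j)}{2}\big]_{ij}$, which has rank $\le2$ (expand $\sin\tfrac{\pi(i-j)}{2}=\sin\tfrac{\pi i}{2}\cos\tfrac{\pi j}{2}-\cos\tfrac{\pi i}{2}\sin\tfrac{\pi j}{2}$), and $\big[(i-j)^q\big]_{ij}$, which has rank $\le q+1$ (binomial theorem), so $\mathrm{rank}\,S_q\le2(q+1)$; moreover $\|S_q\|_2^2=\sum_{i,j=1}^{L}\sin^2\tfrac{\pi(i-j)}{2}\,(i-j)^{2q}\le L^{2q+2}$.

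Combining these, $\|S_q\|_1\le\sqrt{\mathrm{rank}\,S_q}\,\|S_q\|_2\le\sqrt{2(q+1)}\,L^{q+1}$. The series above converges absolutely in trace norm (its terms are summable by the bound just obtained), hence converges there to $M$, and so
\begin{equation}
\|M\|_1\;\le\;\sum_{q\ge1}\frac{\pi^q|c_q|}{N^{q+1}}\sqrt{2(q+1)}\,L^{q+1}\;=\;\sqrt2\,\varphi\sum_{q\ge1}|c_q|\,(\pi\varphi)^q\sqrt{q+1}.
\end{equation}
Since $\phi$ has radius of convergence $\pi$ at the origin, $\limsup_q|c_q|^{1/q}=1/\pi$, so the $q$-th root of the summand tends to $\varphi<1$ and the series converges to a constant that depends only on $\varphi$, not on $N$. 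This gives $\|C_{L,N}-C_{L,\infty}\|_1=O_\varphi(1)$. The main obstacle, to restate it, is recognising that the bound must come from this exact low-rank decomposition of $M$ — made possible by the analyticity of $\phi$ together with $L/N<1$ — rather than from any single norm inequality applied to $M$; everything else is routine bookkeeping.
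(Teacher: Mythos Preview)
Your proof is correct and follows essentially the same strategy as the paper: expand the Toeplitz symbol of $C_{L,N}-C_{L,\infty}$ as a power series of an analytic function, recognise each term as a matrix of rank $O(q)$, bound its trace norm via $\|\cdot\|_1\le\sqrt{\mathrm{rank}}\,\|\cdot\|_2$, and sum the resulting geometric series in $\varphi=L/N<1$. The only notable difference is that the paper keeps track of all residue classes of $N\bmod 4$ via an extra phase $m r/N$ in the numerator (giving two series, with $\cos(\pi r/2)r^{2j}$ and $\sin(\pi r/2)r^{2j+1}$), whereas you restrict to $N\equiv 2\bmod 4$, which collapses to the odd series only; this is harmless for the application but leaves the odd-$N$ cases unaddressed.
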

\begin{proof}
Both $C_{L,N}$ and $C_{L,\infty}$ are Toeplitz matrices. Their matrix elements read
\begin{align}
    \left(C_{L,N}\right)_{i, i+r} = \dfrac{1}{N}\dfrac{\sin{\left(\dfrac{\pi}{2}r+\dfrac{m}{N}r\right)}}{N\sin{\dfrac{\pi r}{N}}\phantom{\Big|}},\label{fincorr}\\
    \left(C_{L,\infty}\right)_{i, i+r} = \dfrac{1}{N}\dfrac{\sin{\left(\dfrac{\pi}{2}r\right)}}{\pi r},
\end{align}
where $m = 2,1,0,-1$ whenever $N \equiv 0,1,2,3 \mod 4$ respectively. Define $L\times L$ Toeplitz matrices $T^{\text{even}}_j, T^{\text{odd}}_j$ with elements
\begin{align}
    \left(T^{\text{even}}_j\right)_{i,i+r} &\equiv \cos\left(\dfrac{\pi r}{2}\right) r^{2j},\\ \left(T^{\text{odd}}_j\right)_{i,i+r} &\equiv \sin\left(\dfrac{\pi r}{2}\right) r^{2j+1}.
\end{align}
By expanding and collecting terms cautiously in \eqref{fincorr}, it can be seen that 
\begin{equation}
    C_{L,N} - C_{L,\infty}  =  \sum_{j=0}^\infty{\dfrac{a_j}{N^{2j+1}}T^{\text{even}}_j} + \sum_{j=0}^\infty{\dfrac{b_j}{N^{2j+2}}T^{\text{odd}}_j},
    \label{diffexpansion}
\end{equation}
where $a_j, b_j$ are the coefficients in the series expansion of the holomorphic functions
\begin{align}
    \dfrac{\sin{mz}}{\sin{\pi z}} = \sum^\infty_{j=0}{a_jz^{2j}},\\
    \dfrac{\cos{mz}}{\sin{\pi z}} - \dfrac{1}{\pi z} = \sum^\infty_{j=0}{b_jz^{2j+1}},
\end{align}
which are absolutely summable within their disc of convergence (the unit disc). The trace norm of $T^{\text{even}}_j, T^{\text{odd}}_j$ can be bounded by using
\begin{equation}
    \text{rank}(T^{\text{even}}_j) = 4j+2,\qquad \text{rank}(T^{\text{odd}}_j) = 4j+4,
\end{equation}
together with the inequality
\begin{equation}
    \|M\|_1\leq\sqrt{\text{rank}(M)}\|M\|_2,
\end{equation}
to find
\begin{align}
    \|T^{\text{even}}_j\|_1&\leq\sqrt{4j+2}\|T^{\text{even}}_j\|_2\nonumber\\
    &\leq\sqrt{4j+2}\sqrt{2\sum^{L}_{\substack{{r=0}\\r\text{ even}}}{r^{4j}(L-r)}}\nonumber\\
    &\leq\sqrt{4j+2}\sqrt{2L\sum^{L}_{\substack{{r=0}\\r\text{ even}}}{r^{4j}}}\nonumber\\
    &\leq\dfrac{\sqrt{4j+2}}{\sqrt{4j+1}}L^{2j+1},
\end{align}
and 
\begin{align}
    \|T^{\text{odd}}_j\|_1&\leq\sqrt{4j+4}\|T^{\text{even}}_j\|_2\nonumber\\
    &\leq\sqrt{4j+2}\sqrt{2\sum^{L}_{\substack{{r=1}\\r\text{ odd}}}{r^{4j+2}(L-r)}}\nonumber\\
    &\leq\sqrt{4j+2}\sqrt{2L\sum^{L}_{\substack{{r=1}\\r\text{ odd}}}{r^{4j+2}}}\nonumber\\
    &\leq\dfrac{\sqrt{4j+4}}{\sqrt{4j+3}}L^{2j+2}.
\end{align}
Going back to \eqref{diffexpansion} this yields
\begin{align}
     \|C_{L,N} - C_{L,\infty}\|_1 &\leq  \sum_{j=0}^\infty{\dfrac{|a_j|}{N^{2j+1}}\|T^{\text{even}}_j}\|_1 +\dfrac{|b_j|}{N^{2j+2}}\|T^{\text{odd}}_j\|_1,\nonumber\\
     &\leq 2\sum_{j=0}^\infty{|a_j|\left(\dfrac{L}{N}\right)^{2j+1} +|b_j|\left(\dfrac{L}{N}\right)^{2j+2}},
\end{align}
which converges and is thus bounded as $N\to\infty$ with constant $L/N$.
\end{proof}

Finally, we have

\begin{proof}[Proof (of Lemma \ref{property})]
We will argue by contradiction. Assume therefore that there is $c>0$ such that for all $\mu<1$, ${\mc I_{L,N}(\mu)\leq c\log L}$. Since Lemma \ref{property_inf} holds, we can choose $\mu < \mu' <1$ such that 
\begin{equation}
    \mc I_{L,\infty}(\mu)>(c+1)\log L \geq \mc I_{L,N}(\mu') + \log L.
\end{equation}
Recall now the following inequality for the trace norm of the difference of Hermitian matrices,
\begin{equation}
    \sum_i{|\alpha_i-\beta_i|} \leq \|A-B\|_1,
\end{equation}
where $\alpha_i, \beta_i$ are the eigenvalues of $A,B$ in descending order \cite{Bhatia96}. We choose $A = V_{L,N}\equiv 2C_{L,N}-\mathds{1}, B = V_{L,\infty}\equiv 2C_{L,\infty}-\mathds{1}$. In our situation, there are asymptotically at least $\log L$ eigenvalues of $A$ that are at least $\mu'-\mu$ away from their associated eigenvalues of $B$, thus the left hand side of the inequality grows with $L$ while the right hand side is bounded by Lemma \ref{norm_bound}, a contradiction.
\end{proof}

\section{Numerical methods}
\label{app:num}
Here we present some numerical results for the approximation of the ground state of our hopping model \eqref{Ham} with GfMPS. After a few generic optimizations within the generic GfMPS class, we found that the numerical optima always fell within a particular subclass of GfMPS, which we dub \textit{ladder GfMPS}, and introduce in what follows. 

To begin with, we recall the basics of GfMPS contration in momentum space (we stay at one fermionic orbital per site, the generalization to a higher number thereof is straightforward). Once more, it is convenient to work in the Majorana representation, where two Hermitian operators $c_{j,1},c_{j,2}$ stand for each fermion mode $\phdg a_j, \dg a_j$,
\begin{equation}
    c_{j,1}\equiv \phdg a_j + \dg a_j,\qquad c_{j,2}\equiv i(\phdg a_j-\dg a_j).
\end{equation}
A Fourier transform then maps these to complex Majorana operators $d_{k,1},d_{k,2}$,
\begin{equation}
    d_{k,s} = \dfrac{1}{N}\sum_{j=1}^N{e^{-ikj}c_{j,s}},\qquad s=1,2.
\end{equation}
This is useful in the translation invariant case, for which different momenta decouple. At the level of correlation matrices, this implies that the Majorana correlation matrix $\Gamma$,
\begin{equation}
    [\Gamma]_{js,j's'} \equiv \dfrac{i}{2}\ev{[c_{j,s},c_{j',s'}]},
\end{equation}
is block diagonalized by the Fourier transform $\mathcal{F}$,
\begin{equation}
    \mathcal{F}\Gamma\mathcal{F}^\dagger = \bigoplus_{k}{G(k)},\quad [G(k)]_{ss'}\equiv \dfrac{i}{2}\ev{[\phdg d_{k,s},\dg d_{k, s'}]}.
\end{equation}
Our translation invariant GfMPS will be determined by a fiducial state of 2 Majorana fermions, $\chi$ left virtual Majorana fermions and $\chi$ right virtual Majorana fermions. Note that in this Appendix $\chi$ differs by a factor of 2 from $\chi$ in the main text, since it counts the number of virtual Majorana fermions. Because we are working with periodic boundary conditions, we can allow $\chi$ to be odd. In fact, the parity of $\chi$ can have significant consequences for the parity structure of the states in the variational class \cite{Mortier20}, and in our case, odd $\chi$ is actually preferable. We denote the correlation matrix of the fiducial state by
\begin{equation}
    \Gamma = \left(\begin{array}{cc}
       A & B \\
        -B^T & D
    \end{array}\right),
\end{equation}
where the block structure comes from separating the two physical fermions ($A$ is a $2\times 2$ submatrix) from the virtual fermions ($D$ is a $2\chi\times2\chi$ submatrix). The correlation matrix for the GfMPS state is obtained by projecting the virtual Majorana fermions onto entangled pairs, which in momentum space reads \cite{Kraus10}
\begin{equation}
    G(k) = A + B\left[D -\left(\begin{array}{cc}
        0 & e^{ik}\mathds{1} \\
        -e^{-ik}\mathds{1} & 0
    \end{array}\right)\right]^{-1}B.
\end{equation}

Next we define a \textit{rail GfMPS}, which is characterized by a $(f+\chi)\times(f+\chi)$ orthogonal matrix $O$ that is divided in blocks
\begin{equation}
    O =\left(\begin{array}{cc}
       O_{11} & O_{12}\\
       O_{12} & O_{22}
    \end{array}\right),
    \label{Oblocking}
\end{equation}
where $O_{11}$ is $f\times f$ and $O_{22}$ is $\chi\times\chi$. The correlation matrix for the fiducial state of $2f$ physical fermions and $2\chi$ virtual fermions for the rail GfMPS is defined to be
\begin{equation}
    \Gamma_O \equiv\left(\begin{array}{cccc}
       0 & O_{11} &0 & O_{12} \\
       -\trans O_{11} & 0 & -\trans O_{21} & 0\\
        0& O_{21} &0 & O_{22}\\
       -\trans O_{12} & 0& -\trans O_{22}  & 0       
    \end{array}\right).
\end{equation}
Therefore, the $2f\times2f$ correlation matrix $G(k)$ for the resulting GfMPS state is
\begin{equation}
    G(k)= \left(\begin{array}{cc}
       0 & T(e^{ik})\\
       -\dg {T(e^{ik})} & 0
    \end{array}\right),
\end{equation}
where
\begin{equation}
    T(z) = O_{11} + O_{12}(O_{22}-z\mathds{1})^{-1}O_{21}
\end{equation}
is a unitary matrix, or, in our $f=1$ case, a complex phase. In fact, readers familiar with the theory of discrete linear time invariant (LTI) systems may recognize $T(z)$ as the transfer function of a lossless system whose state space representation is given by the matrix $O$ with the blocking from \eqref{Oblocking}. This analogy could be exploited to import techniques from the LTI system literature to the GfMPS setting. Here we will not pursue it further. It is nevertheless known that $T(z)$ will be a \textit{finite Blaschke product}, i.e. a unimodular rational function \cite{Alpay16}, of the form
\begin{equation}
    T(z) = \eta\prod_{j=1}^\chi{\dfrac{1-\bar \alpha_j z}{z-\alpha_j}},
\end{equation}
where $|\eta|=1$ and $\alpha_j$ are the eigenvalues of $O_{22}$, which can be any conjugation invariant set of complex numbers inside the unit disk \footnote{Lacking a more intuitive proof, this can be seen as follows: let $\{\alpha_j\}$ be the desired set of eigenvalues. The constraint that $O_{22}$ is a submatrix of an orthogonal matrix may be rephrased in terms of its singular values, demanding all but one of them to be 1 (the latter must equal $\prod_j{|\alpha_j|}$). Then $O_{22}$ can be found as the matrix whose eigenvalues and singular values are those we just prescribed \cite{Li01}, and then extended to an orthogonal matrix $O$.}.

\begin{figure}[t]
	\centering
	\includegraphics[width=0.8\linewidth]{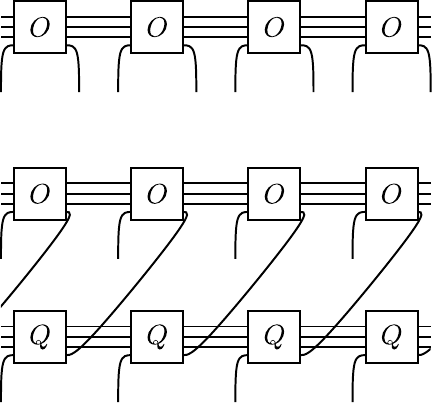}
	\caption{(top) An $f=1,\chi = 3$ rail GfMPS. (bottom) An $f=1, \chi = 7$ ladder GfMPS.}
	\label{fig:rail_ladder}
\end{figure}

An $f=1$ \textit{ladder GfMPS} is made from juxtaposing two $f=1$ rail GfMPS and projecting their respective second physical Majorana fermions onto maximally entangled pairs, to form the ``rungs'' of the ladder (see Figure \ref{fig:rail_ladder}). The resulting GfMPS has again $f=1$ and a bond dimension that equals the sum of those of the rails plus one (from the rungs), $\chi = \chi_1+\chi_2+1$. Its correlation matrix is given by
\begin{equation}
    G(k)= \left(\begin{array}{cc}
       0 & e^{ik}T_1(e^{ik})T_2(e^{ik})^*\\
       -e^{-ik}{T_1(e^{ik})}^*T_2(e^{ik}) & 0
    \end{array}\right)
\end{equation}
What was gained from this construction is that the product $T_1(e^{ik})T_2(e^{ik})^*$ is now a unimodular rational function with poles no longer confined to the unit disk, and thus more general. It can be written in terms of an arbitrary polynomial and its reciprocal polynomial, and a few additional manipulations lead to the general form of $n_k$ we showed in the main text,
\begin{equation}
    n_k = \dfrac{p(\cos \frac{k}{2})^2}{p(\cos \frac{k}{2})^2+q(\sin \frac{k}{2})^2},
\end{equation}
for $p,q$ arbitrary real odd monic polynomials of degree $\chi$ (assuming $\chi$ is odd), or equivalently,
\begin{equation}
    n_k = \dfrac{(1 + \cos k)\,\pi(\cos k)^2}{(1 + \cos k)\,\pi(\cos k)^2+(1 - \cos k)\,\theta(\cos k)^2},
    \label{generic_nk_app}
\end{equation}
for $\pi,\theta$ arbitrary real monic polynomials of degree $\frac{\chi-1}{2}$. We can then try to guess adequate families of polynomials that make $n_k$ close to its exact value from Eq. \eqref{corrmat} on the allowed momenta $k\in\frac{2\pi}{N}\mathbb{Z}\cap(-\pi,\pi]$. We tried expressions based on Fourier expansions of the exact $n_k$ and on Chebyshev polynomials, which nevertheless displayed a clearly exponential b.d. Our best results came from picking $p$ (resp.~$q$) so that its zeros are exactly a subset of the allowed momenta that are outside (resp.~inside) the Fermi surface. For those selected values, the GfMPS approximation reproduces the target state exactly. Several approaches can be followed to choose which precise momenta we make exact. Choosing all of them next to the Fermi points still leads to exponential b.d., but spreading them logarithmically (so that we still pick exponentially more momenta that are close to the Fermi points), leads to a very well-behaved ansatz family that gives rise to the results shown in the main text.

\bibliography{main}
\bibliographystyle{unsrt}

\end{document}